\renewcommand*\@fnsymbol[1]{\the#1}
\theoremstyle{plain}
\newtheorem{theorem}{Theorem}[section]
\newtheorem{lemma}[theorem]{Lemma}
\newtheorem{proposition}[theorem]{Proposition}
\newtheorem{corollary}[theorem]{Corollary}
\theoremstyle{definition}
\newtheorem{definition}[theorem]{Definition}
\theoremstyle{remark}
\newtheorem{remark}[theorem]{Remark}
\newtheorem{example}[theorem]{Example}
\numberwithin{equation}{section}
\newcommand{\ba}{\begin{array}{ll}}
\newcommand{\bal}{\begin{array}{ll}}
\newcommand{\ea}{\end{array}}
\newcommand{\E}{\mathbb{E}}
\newcommand{\probp}{\mathbb{P}}
\newcommand{\probq}{\mathbb{Q}}
\newcommand{\R}{\mathbb{R}}
\newcommand{\N}{\mathbb{N}}
\newcommand{\cF}{{\mathscr{F}}}
\newcommand{\cA}{\mathscr{A}}
\newcommand{\cone}{\mathop{\rm cone}\nolimits}
\newcommand{\e}{\varepsilon}
\newcommand{\VaR}{\mathop {\rm VaR}\nolimits}
\newcommand{\TVaR}{\mathop {\rm TVaR}\nolimits}
\newcommand{\SPAN}{\mathop {\rm SPAN}\nolimits}
\DeclareMathOperator*{\esssup}{ess\,sup}
\def\keywords{\vspace{.5em}
{\noindent\textbf{Keywords}:\,\relax%
}}
\def\JELclassification{\vspace{.5em}
{\noindent\textbf{JEL classification}:\,\relax%
}}
\def\MSCclassification{\vspace{.5em}
{\noindent\textbf{MSC}:\,\relax%
}}
\def\@fnsymbol#1{\ensuremath{\ifcase#1\or *\or 1\or 2\or
   3\or 4\or 5\or 6\or 7\or 8\else\@ctrerr\fi}}
\begin{document}

\title{Capital adequacy tests and\\
limited liability of financial institutions\footnote{We would like to thank J. Staum for valuable comments on a first version of this paper. Partial support through the SNF project 51NF40-144611 ``Capital adequacy, valuation, and portfolio selection for insurance companies'' is gratefully acknowledged.}}

\author{\sc{Pablo Koch-Medina}\thanks{Email: \texttt{pablo.koch@bf.uzh.ch}}\,, \sc{Santiago Moreno-Bromberg}\thanks{Email: \texttt{santiago.moreno@bf.uzh.ch}}}
\affil{Department of Banking and Finance, University of Zurich, Switzerland}

\author{\sc{Cosimo Munari}\,\thanks{Email: \texttt{cosimo.munari@math.ethz.ch}}}
\affil{Department of Mathematics, ETH Zurich, Switzerland}

\date{January 29, 2014}

\maketitle

\begin{abstract}
\noindent The theory of acceptance sets and their associated risk measures plays a key role in the design of capital adequacy tests. The objective of this paper is to investigate, in the context of bounded financial positions, the class of \textit{surplus-invariant} acceptance sets. These are characterized by the fact that acceptability does not depend on the positive part, or \textit{surplus}, of a capital position. We argue that surplus invariance is a reasonable requirement from a regulatory perspective, because it focuses on the interests of liability holders of a financial institution. We provide a dual characterization of surplus-invariant, convex acceptance sets, and show that the combination of surplus invariance and coherence leads to a narrow range of capital adequacy tests, essentially limited to scenario-based tests. Finally, we emphasize the advantages of dealing with surplus-invariant acceptance sets as the primary object rather than directly with risk measures, such as {\em loss-based} and {\em excess-invariant risk measures}, which have been recently studied by Cont, Deguest \& He in~\cite{ContDeguestHe2013} and by Staum in~\cite{Staum2013}, respectively.
\end{abstract}

\keywords{surplus invariance, limited liability, capital adequacy, risk measures, loss-based risk measures, shortfall risk measures, excess invariance}

\MSCclassification{91B30, 91B32}

\JELclassification{C60, G11, G22}

\parindent 0em \noindent


\section{Introduction}

The theory of acceptance sets and risk measures occupies an important place in current debates about solvency regimes in both the insurance and the banking world. The objective of this paper is to investigate the class of \textit{surplus-invariant} acceptance sets and their associated risk measures. These acceptance sets are characterized by the fact that acceptability does not depend on the positive part, or surplus, of a capital position. Further down we argue that surplus-invariant acceptance sets and risk measures have properties that are natural from the perspective of {\em external risk measures}, a term coined in Kou, Peng \& Heyde~\cite{KouPengHeyde2013} to denote risk measures used for external regulation rather than for purely internal risk management purposes. In fact, risk measures with related properties have been recently introduced and studied independently by Cont, Deguest \& He in~\cite{ContDeguestHe2013}, who called them \textit{loss based} and by Staum in~\cite{Staum2013}, who use the term \textit{excess invariant}.

\medskip

Coherent risk measures were introduced by Artzner, Delbaen, Eber \& Heath in the seminal paper~\cite{ArtznerDelbaenEberHeath1999} for finite sample spaces and by Delbaen in~\cite{Delbaen2002} for general probability spaces. Convex risk measures were then studied by F\"{o}llmer \& Schied in~\cite{FoellmerSchied2002} and by Frittelli \& Rosazza Gianin in~\cite{FrittelliGianin2002}. More recently,  Artzner, Delbaen \& Koch-Medina~\cite{ArtznerDelbaenKoch2009} and Farkas, Koch-Medina \& Munari~\cite{FarkasKochMunari2013a} and \cite{FarkasKochMunari2013b} have emphasized the importance of viewing acceptance sets as the primitive objects and risk measures as a particular way of measuring the ``distance'' to acceptability. In keeping with this view we adopt the framework developed in~\cite{FarkasKochMunari2013a} and~\cite{FarkasKochMunari2013b}.

\subsubsection*{Surplus-invariant acceptance sets}

Throughout this paper \textit{capital positions} -- assets net of liabilities -- of financial institutions are represented by the elements of $L^\infty$, the space of essentially bounded random variables on a probability space $(\Omega,\cF,\probp)$. An institution is said to be adequately capitalized if its capital position belongs to a pre-specified \textit{acceptance set}, i.e. to a nonempty, proper subset $\cA$ of $L^\infty$ such that $Y\in\cA$ whenever $Y\geq X$ almost surely for some $X\in\cA$.

\medskip

For a financial institution with limited liability, the positive and negative parts of a capital position $X$ have a clear financial interpretation. The positive part $X^+:=\max\{X,0\}$ is called the {\em surplus} and represents the excess of available funds over the amount needed to meet liabilities. The negative part $X^-:=\max\{-X,0\}$ represents the amount over which available funds fall short of meeting liabilities. Since $X^-$ reflects the limited liability of the owners of the institution, it is often referred to as the {\em owners' option to default}. As explained in more detail in Section~2, when assessing the capital adequacy of a financial institution, regulators -- who act on behalf of liability holders -- should mainly focus on the option to default. This is the rationale for investigating \textit{surplus-invariant} acceptance sets, i.e. acceptance sets $\cA\subset L^\infty$ that have the following property:
\begin{equation}
\label{intro: surplus inv property acceptance sets}
X\in\cA \textrm{ and } Y^-=X^- \textrm{ imply } Y\in\cA\,.
\end{equation}
In words, if a position is acceptable, then any other position having the same negative part is also acceptable, \textit{regardless of its surplus}, i.e. acceptability is not driven by an excess of funds but depends exclusively on the profile of the downside of the capital position.

\subsubsection*{Risk measures and surplus invariance}

While in this paper we take acceptance sets as the point of departure, the authors of~\cite{ContDeguestHe2013} and~\cite{Staum2013} address the issue of surplus invariance within the framework of {\em positive risk measures}, i.e. decreasing functions $\rho:L^\infty \to [0,\infty)$ satisfying the normalization condition $\rho(0)=0$. When viewed as a capital requirement, the number $\rho(X)$ is interpreted as the minimum amount of additional capital that needs to be raised to make the position acceptable. The focus of~\cite{ContDeguestHe2013} and~\cite{Staum2013} is on positive risk measures that are surplus invariant in the following sense: If the capital position $X\in L^\infty$ of a financial institution is not acceptable, then the capital required to make it acceptable does not depend on the surplus $X^+$, i.e.
\begin{equation}
\label{intro: surplus inv property}
\rho(X)=\rho(-X^-) \ \ \mbox{for all} \ X\in L^\infty \,.
\end{equation}

\medskip

Even though, at this level of generality, a positive risk measure $\rho$ does not explicitly build on an acceptance set, the following notion of ``acceptability'' is implicit in its interpretation as a capital requirement: A position $X\in L^\infty$ is {\em acceptable} if it does not require additional capital, i.e. if it belongs to the set
\begin{equation}
\cA(\rho):=\{X\in L^\infty \,; \ \rho(X)=0\}\,,
 \end{equation}
which is easily seen to be an acceptance set. However, what is neither explicit nor implicit is the operational meaning of $\rho$. This is a critical aspect since in any operational setting it is not only important to know how much capital to raise but also how the raised capital can be used to improve the firm's capital position. For this reason we believe that, in the context of capital adequacy, it is important to always start from a notion of ``acceptability'' and only then consider the associated risk measures, as advocated in~\cite{ArtznerDelbaenKoch2009},  in~\cite{FarkasKochMunari2013a} and~\cite{FarkasKochMunari2013b}. Here, we focus on risk measures $\rho_{\cA,S}:L^\infty\to\overline{\R}$ defined as
\begin{equation}
\label{definition: risk measure intro 1}
\rho_{\cA,S}(X):=\inf\left\{m\in\R \,; \ X+\frac{m}{S_0}S_T\in\cA\right\} \ \ \ \mbox{for} \ X\in L^\infty\,,
\end{equation}
where $\cA\subset L^\infty$ is a given acceptance set and $S=(S_0,S_T)$ is a traded asset with initial price $S_0>0$ and positive payoff $S_T\in L^\infty$. The quantity $\rho_{\cA,S}(X)$ has an explicit operational meaning. Indeed, when finite and positive, it represents the amount of capital that needs to be raised and invested in the asset $S$ to make the capital position $X$ acceptable. When finite an negative, it represents the maximum amount of capital that can be extracted from the institution without compromising its capital adequacy.

When $\cA$ is surplus invariant, the risk measure $\rho_{\cA,S}$ enjoys the following {\em surplus invariance} property:
\begin{equation}
\rho_{\cA,S}(X)=\rho_{\cA,S}(-X^-) \ \ \mbox{for all} \ X\in L^\infty \ \mbox{such that} \ \rho_{\cA,S}(X)\geq0\,.
\end{equation}
Note that the above equality requires that $\rho_{\cA,S}(X)\geq0$. Thus, for unacceptable positions with the same negative part capital requirements are identical. By contrast, for acceptable positions, the amount of capital that can be extracted without compromising acceptability will typically not depend only on the negative part, but also on how far into acceptability the position is.

\medskip

For positive risk measures, condition $\rho_{\cA,S}(X)\geq0$ is automatically satisfied for every $X\in L^\infty$. This is because said risk measures do not provide any information regarding the amount of capital that could be extracted while retaining acceptability. The main argument for ignoring this dimension seems to be the conviction that a regulator should only be interested in the case where a  positive capital injection is needed, i.e. in improving the capital adequacy of a badly capitalized institution. However, neglecting the question of whether there is room to extract capital from an adequately capitalized institution is problematic from the perspective of both managers and supervisors of a financial institution. Indeed, financial institutions have a strong interest in managing their capital base efficiently, i.e. they have an interest in knowing whether they can give capital back and, if so, how much. On the other hand, for the supervisors of an institution, it is important to know whether the institution is navigating at the limit of acceptability, in which case insolvency may be imminent, or whether it has a reasonable buffer that would justify less intensive supervision. For this reason, we believe that from a capital adequacy perspective, it is more natural to study risk measures that can also attain negative values.

\subsubsection*{Main contribution}

After presenting several examples of surplus-invariant acceptance sets in Section \ref{section: Surplus-invariant acceptance sets and risk measures}, we focus on surplus-invariant acceptance sets that are {\em convex} or {\em coherent}, i.e. convex cones. Convex and coherent acceptance sets are important because they reflect the principle that higher diversification should lead to lower capital requirements. In Theorem \ref{dual repr convex xs acc} we provide a dual characterization of convex, surplus-invariant acceptance sets that are closed with respect to the weak$^\ast$ topology on $L^\infty$. This topological property is commonly assumed across the literature because the corresponding risk measures $\rho_{\cA,S}$ then satisfy the Fatou property so that their dual representations involve probability measures rather than finitely-additive measures. Moreover, if $(\Omega,\cF,\probp)$ is nonatomic, it is well-known, e.g.~\cite{Svindland2010}, that every law-invariant, convex acceptance set that is norm closed is also weak$^\ast$ closed. As a result, we also obtain corresponding dual representations for convex, surplus-invariant risk measures satisfying the Fatou property.

\smallskip

Using the aforementioned dual representation we show in Theorem \ref{theorem coherent} that the only weak$^\ast$-closed, coherent acceptance sets that are surplus invariant are {\em scenario-based} acceptance sets, i.e. acceptance sets of the form\footnote{The acronym SPAN stands for Standard Portfolio ANalysis and refers to a methodology used to compute margin requirements adopted by many central exchanges.}
\begin{equation}
\SPAN(A):=\{X\in L^\infty \,; \ X1_A\ge 0 \ \text{almost surely}\}
\end{equation}
for some scenario set $A\in\cF$. Therefore, if we simultaneously insist on coherence and surplus invariance, we end up with an extremely limited choice of acceptability criteria that have properties which may not always be desirable. Indeed, if $A\neq\Omega$, then $\SPAN(A)$ is blind to the behaviour of a capital position outside $A$ and, if $A=\Omega$, we end up requiring a zero default probability for an institution to be acceptable. Hence, if we view surplus invariance as a desirable property, to obtain a wider range of acceptability criteria we need to abandon coherence. The situation is even more extreme if law invariance is additionally required: The only weak$^\ast$-closed, coherent acceptance set that is simultaneously law invariant and surplus invariant is the positive cone $L^\infty_+$. In particular, acceptance sets based on Tail Value-at-Risk are not surplus invariant. It is worthwhile noting that acceptance sets based on Value-at-Risk, the other standard acceptability test used in practice are surplus invariant, though not convex. This discussion has implications for policy making since it makes manifest that the choice of a regulatory risk measure is not straightforward and tradeoffs need to be made.

\subsubsection*{Loss-based and excess-invariant risk measures}

The last section of the paper is devoted to clarifying the link between surplus-invariant acceptance sets and the type of risk measures studied in Cont, Deguest \& He~\cite{ContDeguestHe2013} and in Staum~\cite{Staum2013}. These risk measures are positive and surplus invariant. Our analysis targets those properties we deem most relevant from a capital adequacy perspective: The associated acceptability criteria and the corresponding operational interpretation.

\smallskip

We show in Section \ref{subsection: accept and oper meaning} that the acceptability criterion linked to all the examples provided in~\cite{ContDeguestHe2013}, with the exception of scenario-based risk measures, is the most restrictive one: A position $X\in L^\infty$ requires no capital if and only if $X\ge 0$ almost surely. This seems to be too restrictive from a capital adequacy perspective, because the only adequately capitalized institutions would be those having a default probability equal to zero.

\smallskip

Contrary to risk measures of the form $\rho_{\cA,S}$, a positive, surplus-invariant risk measure $\rho$ does not possess an obvious operational interpretation. A natural way to vest $\rho$ with an operational meaning, which is also mentioned in~\cite{ContDeguestHe2013}, is to require $\rho$ to be {\em cash compatible}:
\begin{equation}
\rho(X+\rho(X))=0 \ \ \ \mbox{for all} \ X\in L^\infty\,.
\end{equation}
Under this assumption, adding the capital amount $\rho(X)$ and holding it in {\em cash} would be sufficient to ensure acceptability. Not all positive, surplus-invariant risk measures satisfy this operational property; for instance, the put-option premium studied by Jarrow in~\cite{Jarrow2002} and considered in~\cite{ContDeguestHe2013} is not cash compatible. In Theorem \ref{repr rho vee under self balancing} we show that every positive, surplus-invariant risk measure $\rho$ that is cash compatible and satisfies the axiom of cash subadditivity introduced in El Karoui \& Ravanelli~\cite{ElKarouiRavanelli2009} can be written as
\begin{equation}
\label{intro: truncation}
\rho(X)=\max\{\rho_{\cA,S}(X),0\} \ \ \ \mbox{for} \ X\in L^\infty,
\end{equation}
where $\cA\subset L^\infty$ is surplus invariant and $S=(1,1_\Omega)$ is the cash asset. Since the loss-based risk measures in~\cite{ContDeguestHe2013} are always cash subadditive when convex, they can be represented as truncated risk measures of the form $\rho_{\cA,S}$ as in \eqref{intro: truncation} whenever they are cash compatible. The same holds true for every risk measure that is cash additive subject to positivity, as considered in \cite{Staum2013}. This highlights, once more, the importance of studying risk measures of the form $\rho_{\cA,S}$ associated with surplus-invariant acceptance sets.


\section{Capital adequacy tests and risk measures}
\label{section framework}

We consider a one-period economy with dates $t=0$ and $t=T$. Uncertainty at time $t=T$ is modeled by a probability space $(\Omega,\cF,\probp)$. By $L^\infty$ we will denote the Banach space of all real-valued random variables $X$ on $(\Omega,\cF)$ that are almost surely bounded. As usual, two random variables are identified if they coincide almost surely. If $A\in\cF$ we denote its indicator function by $1_A$. The space $L^\infty$ becomes a Banach lattice when equipped with the natural ordering, i.e. for $X,Y\in L^\infty$ we have $Y\ge X$ whenever this inequality holds almost surely. A random variable $X\in L^\infty$ is said to be positive if $X\ge 0$. The positive cone $L^\infty_+$ is the closed, convex cone consisting of all positive random variables. Given $X\in L^\infty$, we set $X^+:=\max\{X,0\}$ and $X^-:=\max\{-X,0\}$.

\medskip

In our economy, financial institutions issue liabilities and invest in assets at time $t=0$. They settle their liabilities using the payoff of the assets they hold at time $t=T$. Assets and liabilities are assumed to be denominated with respect to a fixed num\'{e}raire and to belong to $L^\infty$. Let $A\in L^\infty$ and $L\in L^\infty$ be the random variables representing the terminal payoff of the institution's assets and liabilities, respectively. The {\em net financial position} or {\em capital position} of the institution is the random variable
\begin{equation}
X:=A-L\in L^\infty\,.
\end{equation}

\medskip

Provided the owners of the institution have {\em unlimited liability}, $X$ represents what is left for them after liabilities have been settled. More typical is the situation in which the owners of the institution have {\em limited liability}. In this case, if the assets of the institution do not suffice to repay its liabilities, the institution defaults and the owners are under no obligation to finance the deficit. In such a case, the {\em owners' surplus} and the {\em owners' option to default} are, respectively,
\begin{equation}
K:=X^+ \ \ \ \mbox{and} \ \ \ D:=-X^-\,.
\end{equation}
In particular, the option to default $D$ represents the amount by which the institution will default at time $t=T$.

\medskip

There is an inherent conflict between the interests of the owners and those of the liability holders. The option to default gives the owners the right to walk away should the proceeds of the assets not suffice to meet the liabilities. Consequently, the owners will care about the possibility of default but not about its \textit{magnitude}, i.e. they will focus more on their surplus $K$ and less on the option to default $D$. On the other hand, liability holders will focus on the option to default, rather than the surplus, because they suffer the consequences of the former and derive no benefits from the latter. In fact, one of the key objectives of external regulation is to help control the probability and magnitude of possible defaults of financial institutions, thereby reducing to acceptable levels the negative implications for liability holders in case of default.

\medskip

Acceptance sets and risk measures are used to formalize the process of setting capital requirements. As usual, we denote by $\overline{\R}:=\R\cup\{\pm\infty\}$ the extended real line.

\medskip

\begin{definition}
\label{definition acceptance set}
A nonempty, proper subset $\cA\subset L^\infty$ is called an {\em acceptance set} if $X\in\cA$ and $Y\ge X$ imply $Y\in\cA$. An acceptance set that is a convex cone is said to be {\em coherent}.
\end{definition}

\medskip

Any acceptance set $\cA\subset L^\infty$ can be regarded as a \textit{capital adequacy test} specified by the regulator: An institution with capital position $X$ passes the capital adequacy test whenever $X\in\cA$.

\medskip

\begin{definition}
A map $\rho:L^\infty\to\overline{\R}$ is said to be a \textit{risk measure} if it is decreasing, i.e. if
\begin{equation}
\rho(Y)\leq\rho(X) \ \ \ \mbox{for any} \ Y\geq X\,,
\end{equation}
and if the set
\begin{equation}
\cA(\rho):=\{X\in L^\infty \,; \ \rho(X)\le 0\}
\end{equation}
is nonempty and proper.
\end{definition}

\medskip

\begin{remark}
For every risk measure $\rho:L^\infty\to\overline{\R}$, the set $\cA(\rho)$ is an acceptance set.
\end{remark}

\subsubsection*{Measuring the distance to acceptability}

To assign an operational meaning to a risk measure $\rho:L^\infty\to\overline{\R}$, we need to describe how to interpret the quantity $\rho(X)$. In this paper, the initial focus will be on risk measures $\rho_{\cA,S}$ associated with a pre-specified acceptance set $\cA\subset L^\infty$ and a pre-specified traded asset $S=(S_0,S_T)$ with initial value $S_0>0$ and nonzero terminal payoff $S_T\in L^\infty_+$. The asset $S$ is called the \textit{eligible} asset. These risk measures are designed to capture the following operational situation: To modify the acceptability of its capital position, the management of a financial institution will be allowed to raise capital and invest it in $S$. When positive, the number $\rho_{\cA,S}(X)$ represents the minimum amount of capital that needs to be raised an invested in $S$ to make the position $X$ acceptable. In this sense, $\rho_{\cA,S}(X)$ can be interpreted as a measure of the ``distance'' of $X$ to acceptability using the asset $S$ as the underlying ``yardstick''. The formal definition is as follows:

\medskip

\begin{definition}
\label{definition risk measure}
Let $\cA\subset L^\infty$ be an acceptance set and $S=(S_0,S_T)$ a traded asset. The risk measure associated to $\cA$ a and $S$ is the map $\rho_{\cA,S}:L^\infty\to\overline{\R}$ defined by
\begin{equation}
\label{risk measure intro}
\rho_{\cA,S}(X):=\inf\left\{m\in\R \,; \ X+\frac{m}{S_0}S_T\in\cA\right\} \ \ \ \mbox{for} \ X\in L^\infty\,.
\end{equation}
If $S=(1,1_\Omega)$ is the risk-free asset, we simply write
\begin{equation}
\rho_{\cA}(X):=\inf\left\{m\in\R \,; \ X+m\in\cA\right\} \ \ \ \mbox{for} \ X\in L^\infty\,.
\end{equation}
\end{definition}

\medskip

It is easy to prove that every map $\rho_{\cA,S}$ is indeed a risk measure and satisfies the following $S$-additivity property.

\medskip

\begin{definition}
Let $S=(S_0,S_T)$ be a traded asset. A risk measure $\rho:L^\infty\to\overline{\R}$ is called \textit{$S$-additive} if
\begin{equation}
\label{S additivity}
\rho(X+\lambda S_T)=\rho(X)-\lambda S_0 \ \ \ \mbox{for all} \ X\in  L^\infty \ \mbox{and} \ \lambda\in\R\,.
\end{equation}
If $\rho$ is $S$-additive with respect to $S=(1,1_\Omega)$, then $\rho$ is called \textit{cash additive}.
\end{definition}

\medskip

\begin{remark}
\label{general remark S additivity}
\begin{enumerate}[(i)]
\item Let $S=(S_0,S_T)$ be a traded asset. It is straightforward to prove that if a risk measure $\rho:L^\infty\to\overline{\R}$ is $S$-additive, then $\rho=\rho_{\cA(\rho),S}$.
\item If~$S=(S_0,S_T)$ is a general traded asset, then the corresponding $S$-additive risk measures need not be finitely valued nor continuous. Conditions for finiteness and continuity can be found in~\cite{FarkasKochMunari2013a} and~\cite{FarkasKochMunari2013b}. If $S_T$ is essentially bounded away from zero, however, then $\rho_{\cA,S}$ is always finitely valued and Lipschitz continuous. This is the case with cash-additive risk measures, for which we also refer to Chapter~4 in~\cite{FoellmerSchied2011}.
\end{enumerate}
\end{remark}


\section{Surplus-invariant acceptance sets and risk measures}
\label{section: Surplus-invariant acceptance sets and risk measures}

In this section we introduce surplus-invariant acceptance sets and risk measures arguing that, from a regulatory perspective, it is reasonable to use capital adequacy tests for which acceptability cannot be achieved by increasing the surplus of the financial institution.

\subsubsection*{Surplus-invariant acceptance sets}

\begin{definition}
\label{definition excess invariant acceptance set}
An acceptance set $\cA\subset L^\infty$ is called {\em surplus invariant} if $Y\in\cA$ whenever $Y^-=X^-$ for some $X\in\cA$.
\end{definition}

\medskip

Surplus invariance is an important property from the perspective of liability holders. Indeed, assume $\cA$ were not surplus invariant. We could then find positions $X\in\cA$ and $Y\not\in \cA$ with $X^-=Y^-$. However, liability holders should be indifferent to the institution having $X$ or $Y$: If the company were to default, then it would default in both cases by the same amount. Hence, what would make $X$ acceptable versus $Y$ would only be an excess of assets over liabilities, which would accrue to the owners of the company and not to liability holders. Thus, a disadvantage for the liability holders (the default) would be compensated by an advantage for the owners (the excess funds).

\medskip

\begin{remark}
Surplus-invariant acceptance sets in $L^\infty$ have been considered in~\cite{Staum2013} where they were called \textit{excess-invariant}. We have opted for the term ``surplus'' because it is more common in the insurance literature.
\end{remark}

\medskip

We start by providing an elementary, yet useful, characterization of surplus-invariant acceptance sets.

\medskip

\begin{lemma}
\label{lemma: xs invariance and indicators}
Let $\cA\subset L^\infty$ be an acceptance set. Then the following statements are equivalent:
\begin{enumerate}[(a)]
\item $\cA$ is surplus invariant;
\item if $X\in\cA$, then $X1_A\in\cA$ for every $A\in\cF$;
\item if $X\in\cA$, then $-X^-\in\cA$.
\end{enumerate}
In particular, any surplus-invariant acceptance set contains $0$.
\end{lemma}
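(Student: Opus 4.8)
The plan is to prove the three statements equivalent by establishing the cycle (a) $\Rightarrow$ (b) $\Rightarrow$ (c) $\Rightarrow$ (a), and then to read off $0\in\cA$ from any one of them.

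For (a) $\Rightarrow$ (b) I would first apply surplus invariance to a given $X\in\cA$, taking $Y:=-X^-$: since $Y^-=X^-$, this yields $-X^-\in\cA$. Now fix $A\in\cF$. From the pointwise inequality $-X^-1_A\ge-X^-$ and the monotonicity built into the definition of an acceptance set we get $-X^-1_A\in\cA$. Finally, $(-X^-1_A)^-=X^-1_A=(X1_A)^-$, so a second application of surplus invariance gives $X1_A\in\cA$. This detour through $-X^-$ — forced by the fact that surplus invariance only lets one replace a position by another with the \emph{same} negative part, not a smaller one — is the only step that takes a moment's thought; the remaining implications are essentially immediate.

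For (b) $\Rightarrow$ (c) I would take $A:=\{X<0\}$ and note that $X1_A=-X^-$ almost surely, so (b) directly gives $-X^-\in\cA$. For (c) $\Rightarrow$ (a), suppose $X\in\cA$ and $Y^-=X^-$; by (c) we have $-X^-\in\cA$, and since $Y\ge-Y^-=-X^-$, the acceptance-set property forces $Y\in\cA$, which is precisely surplus invariance. Finally, for the concluding remark: $\cA$ is nonempty, so choose $X\in\cA$; then $-X^-\in\cA$ by (c), and since $0\ge-X^-$ we obtain $0\in\cA$ by monotonicity once more (equivalently, $0=X1_\emptyset\in\cA$ via (b)).

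I do not anticipate any real obstacle here: everything rests on the monotonicity clause in the definition of an acceptance set together with the elementary identities $(X1_A)^-=X^-1_A$ and $X1_{\{X<0\}}=-X^-$, and the mild bookkeeping of matching negative parts when invoking surplus invariance.
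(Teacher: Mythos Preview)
Your proof is correct and follows the same cycle (a) $\Rightarrow$ (b) $\Rightarrow$ (c) $\Rightarrow$ (a) as the paper, with (b) $\Rightarrow$ (c) and (c) $\Rightarrow$ (a) being essentially identical. The only difference is that for (a) $\Rightarrow$ (b) the paper is slightly more direct: after obtaining $-X^-\in\cA$ by surplus invariance, it uses the single inequality $-X^-\le X1_A$ and monotonicity to conclude $X1_A\in\cA$, avoiding your second application of surplus invariance.
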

\begin{proof}
Assume \textit{(a)} holds and let $X\in\cA$ and $A\in\cF$. Note that $-X^-\in\cA$ by surplus invariance. Then, since $-X^-\le X1_A,$ we obtain $X1_A\in\cA$ by monotonicity of $\cA$ so that \textit{(b)} holds. If \textit{(b)} holds and $X\in\cA$, it is sufficient to set $A:=\{X<0\}$ to have $-X^-= X1_A\in\cA$, proving \textit{(c)}. Finally, assume \textit{(c)} holds, and take $X\in\cA$ and $Y\in L^\infty$ such that $Y^-=X^-$. Then it follows immediately that $-Y^-\in\cA$, implying $Y\in\cA$ by monotonicity. Hence, $\cA$ is surplus invariant.
\end{proof}

\subsubsection*{Examples of surplus-invariant acceptance sets}

\begin{example}[$\SPAN$]
\label{span example}
Let $A$ be a measurable subset of $\Omega$. The $\SPAN$ acceptance set generated by $A$ is the set
\begin{equation}
\SPAN(A):=\{X\in L^\infty \,; \ X1_A\geq0\}\,.
\end{equation}
This is a closed, coherent acceptance set that is surplus invariant. As we show in Theorem~\ref{theorem coherent}, SPAN acceptance sets are essentially the only coherent, surplus-invariant acceptance sets. The acronym SPAN stands for Standard Portfolio ANalysis. It refers to a methodology used to compute margin requirements adopted by many central exchanges, see for instance the discussion in~\cite{ContDeguestHe2013}.
\end{example}

\medskip

\begin{example}[$\SPAN$-type]
\label{span-type example}
Let $A$ be a measurable subset of $\Omega$ and take $V\in L^\infty$. The $\SPAN$-{\em type} acceptance set generated by $A$ and $V$ is the set
\begin{equation}
\SPAN(A,V):=\{X\in L^\infty \,; \ X1_A\geq V1_A\}\,.
\end{equation}
This is a closed, convex acceptance set, which is surplus invariant if and only if $V1_A\le 0$.
\end{example}

\medskip

\begin{example}[Shortfall risk]
\label{shortfall-risk example}
Let $\ell:\R\to\R$ be a nonconstant, convex, decreasing function and fix a level $\alpha>\inf_{x\in\R}\ell(x)$. The set
\begin{equation}
\cA_\ell:=\{X\in L^\infty \,; \ \E[\ell(-X^-)]\leq\alpha\}
\end{equation}
is a closed, convex, surplus-invariant acceptance set, which is also law invariant.
\end{example}

\medskip

The two most widely used risk measures in practice are Value-at-Risk and Tail Value-at-Risk. A detailed treatment can be found in Section~4.4 in~\cite{FoellmerSchied2011}. While the acceptance set corresponding to Value-at-Risk is surplus invariant, this is generally not the case for acceptability criteria based on Tail Value-at-Risk.

\medskip

\begin{example}[$\VaR$]
\label{var example}
The {\em Value-at-Risk} of $X\in L^\infty$ at the level $\alpha\in(0,1)$ is defined as
\begin{equation}
\VaR_\alpha(X):=\inf\{m\in\R \,; \ \probp(X+m<0)\le\alpha\}\,.
\end{equation}
The associated acceptance set
\begin{equation}
\label{var acceptance set}
\cA_\alpha:=\{X\in L^\infty \,; \ \VaR_\alpha(X)\le 0\}=\{X\in L^\infty \,; \ \probp(X<0)\le\alpha\}
\end{equation}
is easily seen to be surplus invariant. It is a closed cone, which is, in general, not convex.
\end{example}

\medskip

Recall that an acceptance set $\cA\subset L^\infty$ is said to be {\em sensitive} if $X\not\in\cA$ for any nonzero $X\in L^\infty$ such that $X\leq0$.

\medskip

\begin{remark}
\label{xs-invariance and sensitivity remark}
The only sensitive, surplus-invariant acceptance set is $L^\infty_+$; see also Proposition~4.2 in~\cite{Staum2013}.
\end{remark}

\medskip

\begin{example}[$\TVaR$]
\label{tvarexample}
The {\em Tail Value-at-Risk} of $X\in L^\infty$ at the level $\alpha\in(0,1)$ is defined as
\begin{equation}
\TVaR_\alpha(X):=\frac{1}{\alpha} \int_0^\alpha\VaR_\beta(X)d\beta\,.
\end{equation}
Tail Value-at-Risk is also known under the names of {\em Expected Shortfall},
{\em Conditional Value-at-Risk}, or {\em Average Value-at-Risk}. The corresponding acceptance set
\begin{equation}
\cA^\alpha:=\{X\in L^\infty \,; \ \TVaR_\alpha(X)\le 0\}
\end{equation}
is a closed, coherent acceptance set. However, since $\cA^\alpha$ is well-known to be sensitive, it is not surplus invariant by Remark~\ref{xs-invariance and sensitivity remark} (except for the trivial cases where it coincides with $L^\infty_+$). This can also be seen directly by considering the following example: Take $A\in\cF$ with $0<\probp(A)<1$ and set
\begin{equation}
X:=-\e 1_A+\gamma 1_{A^c} \ \ \ \mbox{for} \ \e,\gamma>0\,.
\end{equation}
It is easy to show that $\VaR_\beta(X)=\e$ if $\beta<\probp(A)$ and $\VaR_\beta(X)=-\gamma$ otherwise. Take a TVaR level $\alpha$ such that $\probp(A)<\alpha<1$. Then
\begin{eqnarray}
\TVaR_\alpha(X) &=& \frac{1}{\alpha}\int^{\probp(A)}_{0}\VaR_\beta(X)d\beta + \frac{1}{\alpha}\int^{\alpha}_{\probp(A)}\VaR_\beta(X)d\beta \\
 &=& \frac{\e}{\alpha}\,\probp(A) + \gamma\left(\frac{\probp(A)}{\alpha}-1\right)\,.
\end{eqnarray}
From the same calculations we also obtain
\begin{equation}
\TVaR_\alpha(-X^-) = \frac{\e}{\alpha}\,\probp(A)>0\,.
\end{equation}
As a result, the negative part $-X^-$ is always unacceptable with respect to $\cA^\alpha$, whereas $X$ is acceptable whenever the surplus $\gamma$ is sufficiently large.
\end{example}

\subsubsection*{Surplus-invariant risk measures}

Not surprisingly, $S$-additive risk measures associated with surplus-invariant acceptance sets are essentially characterized by the following property: They assign the same amount of required capital to {\em unacceptable} positions having the same negative part. Based on this property, which is shown below, we define the notion of a surplus-invariant risk measure. The class of surplus-invariant risk measures comprises the loss-based risk measures introduced by Cont, Deguest \& He in~\cite{ContDeguestHe2013}, and the excess-invariant risk measures introduced by Staum in~\cite{Staum2013}.

\medskip

\begin{definition}
A risk measure $\rho:L^\infty\to\overline{\R}$ is called \textit{surplus invariant} if
\begin{equation}
\label{surplus inv rm}
\rho(X)=\rho(-X^-) \ \ \ \mbox{for all} \ X\in L^\infty \ \mbox{such that} \ \rho(X)\ge 0\,.
\end{equation}
\end{definition}

\medskip

\begin{remark}
\label{remark on xs invariant risk measures}
In line with the terminology introduced in~\cite{Staum2013}, a risk measure satisfying~\eqref{surplus inv rm} could also be called ``surplus invariant subject to positivity''. To keep language as light as possible we have refrained from doing this because the requirement $\rho(X)\geq0$ in~\eqref{surplus inv rm} is generally necessary. Indeed, assume $\rho$ is {\em normalized}, i.e. $\rho(0)=0$, and $\rho(X)=\rho(-X^-)$ for some $X\in L^\infty$. Then, by monotonicity, $0=\rho(0)\leq\rho(-X^-)=\rho(X)$.
\end{remark}

\medskip

\begin{proposition}
\label{xs-invariant risk measure}
Let $\cA\subset L^\infty$ be an acceptance set and $S=(S_0,S_T)$ a traded asset.
\begin{enumerate}[(i)]
\item If $\cA$ is surplus invariant, then $\rho_{\cA,S}$ is surplus invariant.
\item If $\cA$ is closed and $\rho_{\cA,S}$ is surplus invariant and does not attain the value $-\infty$, then $\cA$ is surplus invariant.
\end{enumerate}
\end{proposition}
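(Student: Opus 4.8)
The plan is to treat the two implications separately, in both cases relying on the elementary pointwise identity
$$(X+c)^- \;=\; (-X^-+c)^- \qquad\text{for every } c\in L^\infty_+,$$
which is checked by distinguishing the events $\{X\ge 0\}$ and $\{X<0\}$.

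For part (i), I would start by fixing $X\in L^\infty$ with $\rho_{\cA,S}(X)\ge 0$. Since $-X^-\le X$, monotonicity of $\rho_{\cA,S}$ gives $\rho_{\cA,S}(-X^-)\ge\rho_{\cA,S}(X)\ge 0$, so it suffices to compare the two defining infima. For $m\ge 0$ the displayed identity applied to $c=\tfrac{m}{S_0}S_T\in L^\infty_+$ gives $(X+\tfrac{m}{S_0}S_T)^-=(-X^-+\tfrac{m}{S_0}S_T)^-$, and surplus invariance of $\cA$ then forces $X+\tfrac{m}{S_0}S_T\in\cA \iff -X^-+\tfrac{m}{S_0}S_T\in\cA$. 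On the other hand, $\rho_{\cA,S}(X)\ge 0$ means that $\{m\in\R : X+\tfrac{m}{S_0}S_T\in\cA\}$ contains no negative $m$, and likewise for $-X^-$; hence each of these two sets equals its intersection with $[0,\infty)$, and those intersections coincide by the previous sentence. Taking infima yields $\rho_{\cA,S}(X)=\rho_{\cA,S}(-X^-)$.

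For part (ii), the first step is to establish $\cA(\rho_{\cA,S})=\cA$. The inclusion $\cA\subseteq\cA(\rho_{\cA,S})$ is immediate since $m=0$ is admissible. Conversely, if $\rho_{\cA,S}(X)\le 0$ then $\rho_{\cA,S}(X)$ is finite (it is never $-\infty$ by hypothesis), so there is a sequence $m_n\to\rho_{\cA,S}(X)$ with $X+\tfrac{m_n}{S_0}S_T\in\cA$; since $X+\tfrac{m_n}{S_0}S_T\to X+\tfrac{\rho_{\cA,S}(X)}{S_0}S_T$ in $L^\infty$, closedness of $\cA$ gives $X+\tfrac{\rho_{\cA,S}(X)}{S_0}S_T\in\cA$, and then monotonicity of $\cA$ (using $\rho_{\cA,S}(X)\le 0$ and $S_T\ge 0$) gives $X\in\cA$. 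Having shown $\cA(\rho_{\cA,S})=\cA$, I would verify condition (c) of Lemma \ref{lemma: xs invariance and indicators}: given $X\in\cA$, put $\lambda_0:=-\rho_{\cA,S}(X)/S_0\ge 0$ (well defined since $\rho_{\cA,S}(X)\in(-\infty,0]$); $S$-additivity gives $\rho_{\cA,S}(X-\lambda_0 S_T)=0$, so surplus invariance of $\rho_{\cA,S}$ yields $\rho_{\cA,S}\bigl(-(X-\lambda_0 S_T)^-\bigr)=0$, i.e.\ $-(X-\lambda_0 S_T)^-\in\cA$. Finally, $X-\lambda_0 S_T\le X$ implies $-(X-\lambda_0 S_T)^-\le -X^-$, so $-X^-\in\cA$ by monotonicity, and Lemma \ref{lemma: xs invariance and indicators} concludes that $\cA$ is surplus invariant.

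The pointwise identity and the bookkeeping with infima are routine; the step that needs care is the passage, in (ii), from surplus invariance of $\rho_{\cA,S}$ to that of $\cA$. The obstacle is that surplus invariance of $\rho_{\cA,S}$ only gives information at positions where the risk measure is nonnegative, and thus cannot be applied directly to $X\in\cA$ (for which $\rho_{\cA,S}(X)\le 0$); the remedy is to translate by $\lambda_0 S_T$ so as to reach a position with risk exactly $0$, and then to transport the conclusion back to $-X^-$ through monotonicity. This is also precisely where the two standing hypotheses are consumed: closedness of $\cA$ (to get $\cA(\rho_{\cA,S})=\cA$) and the fact that $\rho_{\cA,S}$ does not attain $-\infty$ (to make $\lambda_0$ finite, and again for $\cA(\rho_{\cA,S})=\cA$).
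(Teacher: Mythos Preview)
Your proof is correct and follows essentially the same approach as the paper. In part~(i) you use the pointwise identity $(X+c)^-=(-X^-+c)^-$ to obtain an equivalence of acceptability directly, whereas the paper reaches the same conclusion via the inequality $-(X+\lambda S_T)^-\le -X^-+\lambda S_T$ together with monotonicity; part~(ii) is virtually identical to the paper's argument, with the only difference that you spell out the proof of $\cA(\rho_{\cA,S})=\cA$ rather than citing it.
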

\begin{proof}
\textit{(i)} Take $X\in L^\infty$. Since $-X^-\le X$ we clearly have $\rho_{\cA,S}(X)\leq\rho_{\cA,S}(-X^-)$. Assume now that $X+\lambda S_T\in\cA$ for some $\lambda>0$. By surplus invariance we then have $-(X+\lambda S_T)^-\in\cA$. Since $\lambda$ is positive we see that $-(X+\lambda S_T)^-\leq -X^-+\lambda S_T$. This implies $-X^-+\lambda S_T\in\cA$ and, consequently, $\rho_{\cA,S}(-X^-)\leq\lambda$. As a result, we obtain $\rho_{\cA,S}(-X^-)\leq\rho_{\cA,S}(X)$.

\smallskip

\textit{(ii)} Since $\cA$ is closed, we have $\cA=\{X\in L^\infty \,; \ \rho_{\cA,S}(X)\le 0\}$. Take $X\in\cA$ so that $\rho_{\cA,S}(X)\le 0$, and set $\lambda:=\rho_{\cA,S}(X)/S_0$. Since, $\rho_{\cA,S}(X+\lambda S_T )=0$ we have $\rho_{\cA,S}(-(X+\lambda S_T)^-)=0$ by surplus invariance. This means that $-(X+\lambda S_T)^-\in\cA$. However, $-(X+\lambda S_T)^-\le -X^-$ since $\lambda\le 0$, hence $-X^-\in\cA$. It follows from Lemma~\ref{lemma: xs invariance and indicators} that $\cA$ is surplus invariant.
\end{proof}

\medskip

\begin{remark}
In the preceding proposition, requiring that $\rho_{\cA,S}$ does not attain the value $-\infty$ is reasonable. Indeed, any position $X\in L^\infty$ with $\rho_{\cA,S}(X)=-\infty$ would have the following pathological property: One would be able to withdraw arbitrary amounts of capital without compromising acceptability. Moreover, if $\cA$ is closed and convex, then $\rho_{\cA,S}$ is a convex and lower semicontinuous, which implies that $\rho_{\cA,S}$ cannot attain any real value if it attains the value $-\infty$ by Proposition~2.4 in Chapter~I in~\cite{EkelandTemam1999}.
\end{remark}


\section{Dual representations}
\label{representation section}

In this section we provide a characterization of convex, surplus-invariant acceptance sets that are closed with respect to the weak$^\ast$ topology $\sigma(L^\infty,L^1)$. Here we denote by $L^1$ the Banach lattice of real-valued random variables $X$ on $(\Omega,\cF)$ such that $\E[|X|]<\infty$, where again random variables coinciding almost surely are identified. The positive cone in $L^1$ is denoted by $L^1_+$. Note that every element $Z\in L^1$ can be identified with a functional $\psi_Z:L^\infty\to\R$ via the pairing $\psi_Z(X):=\E[XZ]$.

\medskip

\begin{remark}
The interest in $\sigma(L^\infty,L^1)$-closed acceptance sets $\cA\subset L^\infty$ lies in the fact that the corresponding risk measures $\rho_{\cA,S}$ have the so-called \textit{Fatou property}, i.e. they are lower semicontinuous with respect to $\sigma(L^\infty,L^1)$. Furthermore, Svindland showed in~\cite{Svindland2010} that, if the underlying probability space is nonatomic, every acceptance set in $L^\infty$ which is convex, closed and law invariant is automatically $\sigma(L^\infty,L^1)$-closed (see also Jouini, Schachermeyer \& Touzi~\cite{JouiniSchachermayerTouzi2006}). Recall that a set $\cA\subset L^\infty$ is said to be \textit{law invariant} if, whenever $X$ and $Y$ have the same distribution, $X\in\cA$ implies $Y\in\cA$.
\end{remark}

\medskip

\begin{remark}
\label{surplus invariant weak-star lsc}
Proposition~\ref{xs-invariant risk measure} holds, with the same proof, for a $\sigma(L^\infty,L^1)$-closed acceptance set $\cA\subset L^\infty$ in the following form: if $\rho_{\cA,S}$ does not attain the value $-\infty$, then $\rho_{\cA,S}$ is surplus invariant if and only if $\cA$ is surplus invariant.
\end{remark}

\medskip

The {\em (lower) support function} of a subset $\cA\subset L^\infty$ is the map $\sigma_\cA: L^1\to\R\cup\{-\infty\}$ defined by
\begin{equation}
\sigma_\cA(Z):=\inf_{X\in\cA}\E[XZ]\,.
\end{equation}
The set $B(\cA):=\{Z\in L^1 \,; \ \sigma_\cA(Z)>-\infty\}$ is called the {\em barrier cone} of $\cA$.

\medskip

\begin{remark}
\label{barrier cone remark}
Let $\cA$ be a subset of $L^\infty$. Then $\sigma_\cA$ is superlinear and upper semicontinuous. If $\cA$ is a cone, then $B(\cA)=\{Z\in L^1 \,; \ \sigma_\cA(Z)=0\}$. Moreover, by Lemma~3.11 in~\cite{FarkasKochMunari2013b}, if $\cA$ is an acceptance set, then $B(\cA)\subset L^1_+$.
\end{remark}

\subsubsection*{Dual representation of surplus-invariant acceptance sets}

We first establish a dual characterization of convex, surplus-invariant acceptance sets that are $\sigma(L^\infty,L^1)$-closed. Clearly, a characterization of this type also provides an explicit way to construct such acceptance sets.

\medskip

\begin{theorem}
\label{dual repr convex xs acc}
A subset $\cA$ of $L^\infty$ is a $\sigma(L^\infty,L^1)$-closed, convex, surplus-invariant acceptance set if and only if
\begin{equation}
\label{formula dual repr convex xs acc}
\cA=\bigcap_{Z\in B(\cA)}\{X\in L^\infty \,; \ \E[XZ]\geq\gamma(Z)\}
\end{equation}
for some function $\gamma:L^1\to\R\cup\{-\infty\}$ satisfying
\begin{enumerate}[(F1)]
\item $\gamma(Z)\leq0$ for all $Z\in L^1_+$;
\item $\gamma(Z)>-\infty$ for some $Z\in L^1_+$;
\item $\gamma$ is decreasing.
\end{enumerate}
In~\eqref{formula dual repr convex xs acc} we may always choose $\gamma=\sigma_\cA$. Furthermore, for any $\gamma$ satisfying~\eqref{formula dual repr convex xs acc}, we have $\gamma(Z)\leq\sigma_\cA(Z)$ for all $Z\in L^1_+$. Moreover,
\begin{equation}
\label{support function for xs acc}
\sigma_\cA(Z)=\inf_{X\in\cA}\E[-X^-Z] \ \ \ \mbox{for every} \ Z\in L^1_+\,.
\end{equation}
\end{theorem}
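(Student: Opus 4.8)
The plan is to prove the theorem in two directions, with the bulk of the work residing in the ``only if'' part and in identifying the special structure coming from surplus invariance.

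\textbf{The easy direction.} First I would check that any set of the form \eqref{formula dual repr convex xs acc} with $\gamma$ satisfying (F1)--(F3) is a $\sigma(L^\infty,L^1)$-closed, convex, surplus-invariant acceptance set. Convexity and $\sigma(L^\infty,L^1)$-closedness are immediate since $\cA$ is an intersection of weak$^\ast$-closed half-spaces $\{X : \E[XZ] \ge \gamma(Z)\}$ (each $X \mapsto \E[XZ]$ is weak$^\ast$-continuous). For monotonicity: if $X \in \cA$, $Y \ge X$, and $Z$ ranges over the indexing set, which by (F1) we may take inside $L^1_+$ (half-spaces indexed by $Z \notin L^1_+$ with $\gamma(Z) = -\infty$ are all of $L^\infty$; and by Remark~\ref{barrier cone remark} the barrier cone of an acceptance set sits in $L^1_+$ anyway), then $\E[YZ] \ge \E[XZ] \ge \gamma(Z)$. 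Nonemptiness follows from (F2): pick $Z_0 \in L^1_+$ with $\gamma(Z_0) > -\infty$ and note that any sufficiently large constant lies in $\cA$ — here (F1) and (F3) are what guarantee the half-spaces are ``downward unbounded'' in the right way, so $\cA \neq L^\infty$ requires a short separate argument (e.g.\ a very negative constant violates one inequality). Surplus invariance is the point where (F1) is used essentially: given $X \in \cA$, I want $-X^- \in \cA$ (which suffices by Lemma~\ref{lemma: xs invariance and indicators}). For $Z \in L^1_+$ one has $\E[-X^- Z] = \E[X 1_{\{X<0\}} Z] \le 0$ is not quite enough; instead I would argue $\E[-X^- Z] \ge \gamma(Z)$ by splitting $\E[X Z] = \E[X^+ Z] + \E[-X^- Z]$ and using $\E[X^+ Z] \ge 0$ together with $\E[XZ] \ge \gamma(Z)$ — wait, that gives $\E[-X^- Z] \ge \gamma(Z) - \E[X^+ Z]$, the wrong direction. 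The correct route: since $X \in \cA$, also $X 1_A \in \cA$ for all $A$ by the hypothesis we are proving is equivalent; but we cannot use that yet. So instead observe $-X^- \le X$, hence $\E[-X^- Z]$ could be smaller than $\E[XZ]$; to salvage this I use (F1): $-X^- \le 0$ on $\{X \ge 0\}$ and there it only helps, while the genuine content is that $\gamma(Z) \le 0 \le$ nothing — so in fact the clean statement is that \eqref{support function for xs acc} forces $\gamma(Z) = \sigma_\cA(Z) = \inf_{X \in \cA}\E[-X^- Z]$, and I should derive surplus invariance simultaneously with that identity rather than separately.

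\textbf{The hard direction.} Given a $\sigma(L^\infty,L^1)$-closed, convex, surplus-invariant acceptance set $\cA$, the bipolar / Fenchel--Moreau theorem for weak$^\ast$-closed convex sets gives $\cA = \bigcap_{Z \in L^1}\{X : \E[XZ] \ge \sigma_\cA(Z)\}$, and restricting to $Z \in B(\cA)$ loses nothing since $\sigma_\cA(Z) = -\infty$ elsewhere. So $\gamma = \sigma_\cA$ works, and by Remark~\ref{barrier cone remark} $B(\cA) \subset L^1_+$. It remains to verify (F1)--(F3) for $\gamma = \sigma_\cA$ and to prove \eqref{support function for xs acc} and the maximality claim $\gamma \le \sigma_\cA$. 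Property (F3), $\sigma_\cA$ decreasing on $L^1_+$: if $0 \le Z_1 \le Z_2$ and $X \in \cA$, then $-X^- \in \cA$ by Lemma~\ref{lemma: xs invariance and indicators}, so $\sigma_\cA(Z_2) = \inf_{X \in \cA}\E[XZ_2] \le \inf_{X \in \cA}\E[-X^- Z_2] \le \inf_{X \in \cA}\E[-X^- Z_1] \le \sigma_\cA(Z_1)$ once \eqref{support function for xs acc} is in hand — so the order of operations is: first prove \eqref{support function for xs acc}, then (F3) is a corollary. Property (F1), $\sigma_\cA(Z) \le 0$ for $Z \in L^1_+$: since $0 \in \cA$ (last line of Lemma~\ref{lemma: xs invariance and indicators}), $\sigma_\cA(Z) \le \E[0 \cdot Z] = 0$. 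Property (F2): $B(\cA) \neq \emptyset$ because $\cA$ is proper and weak$^\ast$-closed, so by Hahn--Banach it is separated from some point, yielding a $Z$ with $\sigma_\cA(Z) > -\infty$; and we just showed $B(\cA) \subset L^1_+$.

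\textbf{The core identity \eqref{support function for xs acc}, which I expect to be the main obstacle.} Fix $Z \in L^1_+$. The inequality $\sigma_\cA(Z) \le \inf_{X \in \cA}\E[-X^- Z]$ is trivial because $-X^- \in \cA$ whenever $X \in \cA$ (Lemma~\ref{lemma: xs invariance and indicators}), so the infimum on the right is over a subset of $\cA$. For the reverse, take any $X \in \cA$; by Lemma~\ref{lemma: xs invariance and indicators} $X 1_{\{X < 0\}} = -X^- \in \cA$, and more importantly I want to produce, for each $X \in \cA$, an element $Y \in \cA$ with $\E[YZ]$ close to $\E[-X^- Z]$ from below. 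Since $-X^- \in \cA$ itself and $\E[(-X^-)Z] = \E[-X^- Z]$, we get $\sigma_\cA(Z) \le \E[-X^- Z]$ for every $X \in \cA$; taking the infimum over $X \in \cA$ gives $\sigma_\cA(Z) \le \inf_{X \in \cA}\E[-X^- Z]$, the same direction. The genuinely substantive inequality is $\inf_{X\in\cA}\E[-X^-Z] \le \sigma_\cA(Z)$, i.e.\ $\inf_{X \in \cA}\E[-X^- Z] \le \inf_{X \in \cA}\E[XZ]$: for each fixed $X \in \cA$ we need $\inf_{Y \in \cA}\E[-Y^- Z] \le \E[XZ]$, and taking $Y = X$ gives $\E[-X^- Z] \le \E[XZ]$, which holds because $-X^- \le X$ and $Z \ge 0$. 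Thus both infima agree and \eqref{support function for xs acc} follows — the role of surplus invariance being exactly that the right-hand infimum is over a \emph{subset} of $\cA$, so it a priori dominates $\sigma_\cA$, while the pointwise bound $-X^- \le X$ gives the opposite; these pinch it. Finally, the maximality claim: if $\cA = \bigcap_{Z \in B(\cA)}\{X : \E[XZ] \ge \gamma(Z)\}$ for some $\gamma$, then for each fixed $Z_0 \in B(\cA)$ and each $X \in \cA$ we have $\E[X Z_0] \ge \gamma(Z_0)$, hence $\sigma_\cA(Z_0) = \inf_{X \in \cA}\E[X Z_0] \ge \gamma(Z_0)$; restricting attention to $Z_0 \in L^1_+$ (where the half-spaces are non-trivial) gives $\gamma(Z) \le \sigma_\cA(Z)$ for all $Z \in L^1_+$, as claimed. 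The delicate point throughout is keeping straight which infimum is over which set and exploiting that surplus invariance collapses $\cA$ onto its ``negative-part slice'' $\{-X^- : X \in \cA\}$ without changing support functionals against positive $Z$.
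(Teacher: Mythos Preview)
Your ``only if'' direction is essentially correct and matches the paper: the bipolar/Fenchel--Moreau representation gives $\gamma=\sigma_\cA$; the identity \eqref{support function for xs acc} follows by pinching, using $-X^-\le X$, $Z\ge0$ for one inequality and $-X^-\in\cA$ (Lemma~\ref{lemma: xs invariance and indicators}) for the other; and (F1)--(F3) together with the maximality bound $\gamma\le\sigma_\cA$ drop out as you describe.

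The genuine gap is in the ``if'' direction, precisely at surplus invariance of the intersection $\cA(\gamma)$. You correctly diagnose that your attempts via (F1) alone, or via $-X^-\le X$, go the wrong way, but your proposed escape --- to ``derive surplus invariance simultaneously with the identity'' --- is circular: identity \eqref{support function for xs acc} is a statement about $\sigma_\cA$ for a set \emph{already known} to be surplus invariant, whereas here you must establish surplus invariance of $\cA(\gamma)$ from properties of $\gamma$ alone. The property you never exploit is (F3), and this is the whole point. The paper's trick is: for $X\in\cA(\gamma)$ and $Z\in L^1_+$,
\[
\E[-X^- Z] \;=\; \E\bigl[X\cdot (1_{\{X<0\}}Z)\bigr] \;\ge\; \gamma\bigl(1_{\{X<0\}}Z\bigr) \;\ge\; \gamma(Z),
\]
where the first inequality is the membership condition for $X$ applied with the \emph{smaller} density $1_{\{X<0\}}Z$, and the second is exactly (F3) since $1_{\{X<0\}}Z\le Z$. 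This yields $-X^-\in\cA(\gamma)$ directly, and Lemma~\ref{lemma: xs invariance and indicators} finishes. Note in particular that your opening claim ``surplus invariance is the point where (F1) is used essentially'' misidentifies the key hypothesis. As a minor aside, in your nonemptiness/properness discussion the roles are reversed: (F1) gives $0\in\cA(\gamma)$ (nonemptiness), while (F2) with a nonzero $Z$ gives properness.
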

\begin{proof}
First, let $\cA(\gamma)$ be the intersection in~\eqref{formula dual repr convex xs acc}. As an intersection of $\sigma(L^\infty,L^1)$-closed halfspaces generated by positive, $\sigma(L^\infty,L^1)$-continuous, linear functionals, $\cA(\gamma)$ is a $\sigma(L^\infty,L^1)$-closed, convex acceptance set. In particular, $\cA(\gamma)$ is a nonempty, proper subset of $L^\infty$ by \textit{(F1)} and \textit{(F2)}. To prove that $\cA(\gamma)$ is surplus invariant, take $X\in\cA(\gamma)$. Since $\gamma$ is decreasing, for every $Z\in L^1_+$ we obtain
\begin{equation}
\E[-X^- Z] = \E[X1_{\{X<0\}} Z] \geq \gamma(1_{\{X<0\}} Z) \geq \gamma(Z)
\end{equation}
showing that $\cA(\gamma)$ is surplus invariant by Lemma~\ref{lemma: xs invariance and indicators}.

\smallskip

Now assume $\cA$ is a $\sigma(L^\infty,L^1)$-closed, convex acceptance set that is surplus invariant. It follows from Theorem~4.4 in~\cite{FarkasKochMunari2013c} that we can represent $\cA$ as in~\eqref{formula dual repr convex xs acc} taking $\gamma=\sigma_\cA$ and that, for any function $\gamma$ such that~\eqref{formula dual repr convex xs acc} is satisfied, we must have $\gamma(Z)\leq\sigma_\cA(Z)$ for all $Z\in L^1_+$. Note that the representation of $\sigma_\cA$ in~\eqref{support function for xs acc} follows from the surplus invariance of $\cA$. Hence, $\sigma_\cA$ satisfies \textit{(F3)}. Since $\cA$ is nonempty and proper, $\sigma_\cA$ also satisfies \textit{(F1)} and \textit{(F2)}.
\end{proof}

\medskip

\begin{remark}
\label{remark: radon nikodym}
Let $\cA\subset L^\infty$ be a $\sigma(L^\infty,L^1)$-closed, convex, surplus-invariant acceptance set. By positive homogeneity, we can restrict the intersection in \eqref{formula dual repr convex xs acc} to all random variables $Z\in B(\cA)$ satisfying $\E[Z]=1$. Random variables of this type can be identified with Radon-Nikodym derivatives $\frac{d\probq}{d\probp}$ of probability measures $\probq\ll\probp$, i.e. probability measures that are absolutely continuous with respect to $\probp$. Hence, the ``probabilistic'' version of the dual representation of $\cA$ in~\eqref{formula dual repr convex xs acc} is
\begin{equation}
\label{formula dual repr convex xs acc probabilities}
\cA=\bigcap_{\probq\ll\probp, \,\frac{d\probq}{d\probp}\in B(\cA)}\left\{X\in L^\infty \,; \ \E_\probq[X]\geq\sigma_\cA\left(\frac{d\probq}{d\probp}\right)\right\}\,.
\end{equation}
\end{remark}

\subsubsection*{Dual representation of surplus-invariant risk measures}

Next, we establish the consequences of surplus invariance for the standard dual representation of convex risk measures satisfying the Fatou property. For any $\rho:L^\infty\to\R\cup\{\infty\}$ we denote by $\rho^\ast$ its {\em $\sigma(L^\infty,L^1)$-conjugate}, i.e.
\begin{equation}
\rho^\ast(Z) := \sup_{X\in L^\infty}\{E[XZ]-\rho(X)\} \ \ \mbox{for} \ Z\in L^1\,.
\end{equation}

\medskip

\begin{proposition}
Let $\rho:L^\infty\to\R\cup\{\infty\}$ be a convex risk measures satisfying the Fatou property. If $\rho$ is surplus invariant, then
\begin{equation}
\label{dual repr surplus inv general rm}
\rho(X)=\sup_{\probq\ll\probp}\left\{-\E_\probq[X]-\rho^\ast\left(-\frac{d\probq}{d\probp}\right)\right\} \ \ \ \mbox{for} \ X\in L^\infty\,.
\end{equation}
Moreover, if $\rho(X)\geq0$, then
\begin{equation}
\label{dual repr surplus inv general rm, 2}
\rho(X)=\sup_{\probq\ll\probp}\left\{\E_\probq[X^-]-\rho^\ast\left(-\frac{d\probq}{d\probp}\right)\right\}\,.
\end{equation}
\end{proposition}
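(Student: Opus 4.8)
The plan is to derive the first identity \eqref{dual repr surplus inv general rm} from the classical dual (Fenchel--Moreau) representation of convex risk measures with the Fatou property, and then to obtain the second identity \eqref{dual repr surplus inv general rm, 2} from it by a one-line use of surplus invariance. Surplus invariance plays no role in establishing \eqref{dual repr surplus inv general rm} itself; it enters only in the passage to \eqref{dual repr surplus inv general rm, 2}, and precisely through the sign restriction $\rho(X)\ge 0$.

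For \eqref{dual repr surplus inv general rm}: since $\rho$ is convex, lower semicontinuous for $\sigma(L^\infty,L^1)$ (the Fatou property), and proper (because $\cA(\rho)\neq\emptyset$, $\rho\not\equiv+\infty$), it coincides with its biconjugate relative to the dual pair $(L^\infty,L^1)$, so that
\[
\rho(X)=\sup_{Z\in L^1}\{\E[XZ]-\rho^\ast(Z)\}\,,\qquad X\in L^\infty\,.
\]
Monotonicity of $\rho$ forces $\rho^\ast(Z)=+\infty$ whenever $Z\notin-L^1_+$: if $\E[Z1_A]>0$ for some $A\in\cF$, replacing $X$ by $X+t1_A$ in the definition of $\rho^\ast(Z)$, using $\rho(X+t1_A)\le\rho(X)$, and letting $t\to\infty$ makes $\rho^\ast(Z)$ infinite. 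Writing $Z=-\frac{d\probq}{d\probp}$ on the effective domain of $\rho^\ast$ — and, when $\probq$ is required to be a probability measure, using the normalization $\E[Z]=-1$ that cash-additivity of $\rho$ imposes on the domain of $\rho^\ast$ (replace $X$ by $X+m$) — turns the biconjugate formula into \eqref{dual repr surplus inv general rm}. Equivalently, one may quote Theorem~4.33 in~\cite{FoellmerSchied2011} and identify the minimal penalty function $\probq\mapsto\sup_{Y\in\cA(\rho)}\E_\probq[-Y]$ with $\probq\mapsto\rho^\ast\bigl(-\tfrac{d\probq}{d\probp}\bigr)$, which is routine.

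To deduce \eqref{dual repr surplus inv general rm, 2}, fix $X\in L^\infty$ with $\rho(X)\ge0$. Then surplus invariance \eqref{surplus inv rm} applies and yields $\rho(X)=\rho(-X^-)$; this is the one point where the hypothesis $\rho(X)\ge0$ is needed. Applying \eqref{dual repr surplus inv general rm} to the position $-X^-\in L^\infty$ and using $-\E_\probq[-X^-]=\E_\probq[X^-]$ for every $\probq\ll\probp$, one obtains
\[
\rho(X)=\rho(-X^-)=\sup_{\probq\ll\probp}\Bigl\{\E_\probq[X^-]-\rho^\ast\bigl(-\tfrac{d\probq}{d\probp}\bigr)\Bigr\}\,,
\]
which is \eqref{dual repr surplus inv general rm, 2}.

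The step requiring the most care is \eqref{dual repr surplus inv general rm}: one must be precise about Fenchel--Moreau duality for the pair $(L^\infty,L^1)$ and, above all, about the exact shape of the effective domain of $\rho^\ast$ — the reduction to $-L^1_+$ from monotonicity is immediate, whereas pinning the domain down to negatives of Radon--Nikodym densities of probability measures is where the cash-additivity built into the notion of convex risk measure is used. Once \eqref{dual repr surplus inv general rm} is in place, the passage to \eqref{dual repr surplus inv general rm, 2} is immediate and requires no further topological or measure-theoretic input.
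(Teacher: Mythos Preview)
Your approach is the same as the paper's: obtain \eqref{dual repr surplus inv general rm} from the standard Fenchel--Moreau dual representation of convex, $\sigma(L^\infty,L^1)$-lower-semicontinuous risk measures (the paper simply cites Theorem~6 and Corollary~7 in~\cite{FrittelliGianin2002} for $\rho(X)=\sup_{Z\in L^1_+}\{-\E[XZ]-\rho^\ast(-Z)\}$ and then refers to Remark~\ref{remark: radon nikodym} to rewrite this in terms of Radon--Nikodym derivatives), and then derive \eqref{dual repr surplus inv general rm, 2} by invoking surplus invariance to write $\rho(X)=\rho(-X^-)$ and applying \eqref{dual repr surplus inv general rm} to $-X^-$.

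One point to correct: you write that cash-additivity is ``built into the notion of convex risk measure'', but in this paper it is not---risk measures are defined merely as decreasing maps with nonempty, proper acceptance set (see the definition in Section~\ref{section framework}). Consequently, your argument for pinning the effective domain of $\rho^\ast$ down to $\{Z:\E[Z]=-1\}$ via the translation $X\mapsto X+m$ is not available in this framework. The paper's own proof is equally brief on this passage, so your treatment is no less complete than the original; just be aware that the restriction from $L^1_+$ to densities of genuine probability measures is not justified by the paper's standing hypotheses in the way you claim.
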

\begin{proof}
Since $\rho$ is convex and lower semicontinuous with respect to $\sigma(L^\infty,L^1)$, by combining Theorem~6 and Corollary~7 in~\cite{FrittelliGianin2002}, we obtain for any $X\in L^\infty$ the standard dual representation
\begin{equation}
\rho(X)=\sup_{Z\in L^1_+}\left\{-\E[XZ]-\rho^\ast\left(-Z\right)\right\}\,.
\end{equation}
Hence,~\eqref{dual repr surplus inv general rm} follows immediately if we pass to the corresponding Radon-Nikodym derivatives as discussed in Remark~\ref{remark: radon nikodym}. If $\rho(X)\geq0$, then $\rho(X)=\rho(-X^-)$ by surplus invariance, and therefore~\eqref{dual repr surplus inv general rm, 2} also holds.
\end{proof}

\medskip

For surplus-invariant risk measures of the form $\rho_{\cA,S}$, which are the operationally relevant ones, we obtain a sharper dual representation for which the impact of surplus invariance is more visible.

\begin{proposition}
\label{corollary dual repr surplus inv rm}
Let $\cA\subset L^\infty$ be a convex, $\sigma(L^\infty,L^1)$-closed acceptance set, and assume $\rho_{\cA,S}$ does not attain the value $-\infty$. If $\cA$ is surplus invariant, then
\begin{equation}
\label{dual repr surplus inv rm}
\rho_{\cA,S}(X)=\sup_{\probq\ll\probp, \,\E_\probq[S_T]=S_0}\left\{-\E_\probq[X]+\inf_{Y\in\cA}\E_\probq[-Y^-]\right\} \ \ \ \mbox{for} \ X\in L^\infty\,.
\end{equation}
Moreover, if $\rho_{\cA,S}(X)\geq0$, then
\begin{equation}
\label{dual repr surplus inv rm, 2}
\rho_{\cA,S}(X)=\sup_{\probq\ll\probp, \,\E_\probq[S_T]=S_0}\left\{\E_\probq[X^-]+\inf_{Y\in\cA}\E_\probq[-Y^-]\right\}\,.
\end{equation}
\end{proposition}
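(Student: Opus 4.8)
The plan is to combine the dual description of the acceptance set given by Theorem~\ref{dual repr convex xs acc} with the defining infimum of $\rho_{\cA,S}$, so as to obtain first a ``numéraire-free'' dual representation of $\rho_{\cA,S}$ in terms of the support function $\sigma_\cA$, and then to simplify $\sigma_\cA$ using surplus invariance via identity~\eqref{support function for xs acc}. Formula~\eqref{dual repr surplus inv rm, 2} will then follow from~\eqref{dual repr surplus inv rm} and the surplus invariance of the risk measure itself.

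For the first and main step, observe that since $\cA$ is convex, $\sigma(L^\infty,L^1)$-closed and surplus invariant, Theorem~\ref{dual repr convex xs acc} (taking $\gamma=\sigma_\cA$) gives $\cA=\bigcap_{Z\in B(\cA)}\{Y\in L^\infty\,;\ \E[YZ]\geq\sigma_\cA(Z)\}$, while $B(\cA)\subset L^1_+$ by Remark~\ref{barrier cone remark}. Hence $X+\tfrac{m}{S_0}S_T\in\cA$ holds precisely when $\E[XZ]+\tfrac{m}{S_0}\E[S_TZ]\geq\sigma_\cA(Z)$ for every $Z\in B(\cA)$. Taking the infimum over all admissible $m$ and using the positive homogeneity of $\sigma_\cA$ to normalise each $Z$ with $\E[S_TZ]>0$ so that $\E[S_TZ]=S_0$, one arrives at
\[
\rho_{\cA,S}(X)=\sup_{Z\in L^1_+,\ \E[S_TZ]=S_0}\bigl\{-\E[XZ]+\sigma_\cA(Z)\bigr\}\,;
\]
writing the admissible $Z$ as densities of measures $\probq\ll\probp$ with $\E_\probq[S_T]=S_0$, this is the representation $\rho_{\cA,S}(X)=\sup_{\probq}\{-\E_\probq[X]+\sigma_\cA(\tfrac{d\probq}{d\probp})\}$. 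The assumption that $\rho_{\cA,S}$ does not attain $-\infty$ is used here both to guarantee that the supremum is over a nonempty set and, together with the superlinearity of $\sigma_\cA$, to rule out the degenerate case in which a dual element $Z\in B(\cA)$ with $\E[S_TZ]=0$ and $\E[XZ]<\sigma_\cA(Z)$ forces $\rho_{\cA,S}(X)=+\infty$ without the supremum already being $+\infty$ (one builds $Z_1+tZ_0$ with $\E[S_T(Z_1+tZ_0)]=S_0$ and lets $t\to\infty$). Alternatively, this representation can be quoted directly from~\cite{FarkasKochMunari2013a,FarkasKochMunari2013b}.

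Next I would insert surplus invariance. By~\eqref{support function for xs acc} in Theorem~\ref{dual repr convex xs acc}, $\sigma_\cA(Z)=\inf_{Y\in\cA}\E[-Y^-Z]$ for every $Z\in L^1_+$, and hence $\sigma_\cA(\tfrac{d\probq}{d\probp})=\inf_{Y\in\cA}\E_\probq[-Y^-]$ for each $\probq$ appearing above; substituting this into the representation of the previous step yields~\eqref{dual repr surplus inv rm}. Finally, if $\rho_{\cA,S}(X)\geq0$, then $\rho_{\cA,S}$ is surplus invariant by Proposition~\ref{xs-invariant risk measure}(i), so $\rho_{\cA,S}(X)=\rho_{\cA,S}(-X^-)$; applying~\eqref{dual repr surplus inv rm} with $-X^-$ in place of $X$ and using $-\E_\probq[-X^-]=\E_\probq[X^-]$ gives~\eqref{dual repr surplus inv rm, 2}.

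The only genuinely technical point is the first step: passing rigorously from the defining infimum of $\rho_{\cA,S}$ to a supremum over normalised dual elements, while keeping careful track of the dual elements $Z$ with $\E[S_TZ]=0$ — which, for a general eligible asset $S$, are exactly what can push $\rho_{\cA,S}$ to $+\infty$ — and of the normalisation $\E_\probq[S_T]=S_0$. Once that representation is secured, the remaining steps are straightforward substitutions requiring no further estimates.
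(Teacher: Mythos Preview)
Your proposal is correct and follows essentially the same route as the paper: obtain the dual representation $\rho_{\cA,S}(X)=\sup\{\sigma_\cA(Z)-\E[XZ]\}$ over $Z$ normalised by $\E[S_TZ]=S_0$, substitute the surplus-invariant form of $\sigma_\cA$ from~\eqref{support function for xs acc}, pass to Radon--Nikodym densities, and then invoke surplus invariance of $\rho_{\cA,S}$ (Proposition~\ref{xs-invariant risk measure}) for~\eqref{dual repr surplus inv rm, 2}. The only difference is cosmetic: the paper quotes the first step directly from Corollary~4.14 and Theorem~4.16 in~\cite{FarkasKochMunari2013c}, whereas you sketch its derivation from the dual description of~$\cA$ and handle the degenerate dual elements with $\E[S_TZ]=0$ explicitly.
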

\begin{proof}
Since $\rho_{\cA,S}$ does not attain the value $-\infty$, we can apply Corollary 4.14 and Theorem 4.16 in~\cite{FarkasKochMunari2013c} to obtain
\begin{equation}
\rho_{\cA,S}(X)=\sup_{Z\in B(\cA), \,\E[S_TZ]=S_0}\left\{\sigma_\cA(Z)-\E[XZ]\right\}
\end{equation}
for every $X\in L^\infty$. Hence, the representation of $\sigma_\cA$ in~\eqref{support function for xs acc} yields~\eqref{dual repr surplus inv rm} once we consider the corresponding Radon-Nikodym derivatives. The second representation follows immediately from surplus invariance.
\end{proof}

\subsubsection*{Coherent, surplus-invariant acceptance sets}

We now focus on coherent, surplus-invariant acceptance sets and show that this class essentially coincides with the class of SPAN acceptance sets defined in Example~\ref{span example}. Hence, if surplus invariance is deemed to be a desirable property, to have a wider range of acceptability criteria we are forced to leave the coherent world. This becomes even more manifest if we additionally require law invariance, a common assumption in the literature. Indeed, we show that the positive cone $L^\infty_+$ is the only surplus-invariant coherent acceptance set that is also law invariant.

\medskip

In this section we assume that the space $L^1$ is separable. By Theorem~19.2 in~\cite{Billingsley1995}, this is the case whenever the $\sigma$-algebra $\cF$ is countably generated.

\medskip

\begin{lemma}
\label{support for coherent xs-invariant}
If $\cA\subset L^\infty$ is a coherent, surplus-invariant acceptance set that is $\sigma(L^\infty,L^1)$ closed, then
\begin{equation}
\label{xs-coherent barrier}
B(\cA)=\{Z\in L^1_+ \,; \ \E[X^-Z]=0, \ \forall \ X\in\cA\}\,.
\end{equation}
\end{lemma}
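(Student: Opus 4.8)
The plan is to read off the claim directly from the machinery of Section~\ref{representation section}. Since $\cA$ is coherent it is in particular convex, and by hypothesis it is $\sigma(L^\infty,L^1)$-closed and surplus invariant, so Theorem~\ref{dual repr convex xs acc} applies; in particular the support-function identity~\eqref{support function for xs acc} holds, namely $\sigma_\cA(Z)=\inf_{X\in\cA}\E[-X^-Z]$ for every $Z\in L^1_+$. Moreover, because $\cA$ is a cone, Remark~\ref{barrier cone remark} gives both $B(\cA)\subset L^1_+$ and $B(\cA)=\{Z\in L^1 \,; \ \sigma_\cA(Z)=0\}$. So everything reduces to showing that, for $Z\in L^1_+$, one has $\sigma_\cA(Z)=0$ if and only if $\E[X^-Z]=0$ for all $X\in\cA$.

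For the inclusion ``$\subset$'', let $Z\in B(\cA)$; then $Z\in L^1_+$ and $\sigma_\cA(Z)=0$ by Remark~\ref{barrier cone remark}, so~\eqref{support function for xs acc} yields $\inf_{X\in\cA}\E[-X^-Z]=0$. Hence $\E[-X^-Z]\ge 0$, i.e. $\E[X^-Z]\le 0$, for every $X\in\cA$; but $X^-\ge 0$ and $Z\ge 0$ almost surely force $\E[X^-Z]\ge 0$, so $\E[X^-Z]=0$ for every $X\in\cA$. For the reverse inclusion, take $Z\in L^1_+$ with $\E[X^-Z]=0$ for all $X\in\cA$; then $\E[-X^-Z]=0$ for all $X$ in the nonempty set $\cA$, so~\eqref{support function for xs acc} gives $\sigma_\cA(Z)=0>-\infty$, whence $Z\in B(\cA)$.

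I do not anticipate a genuine obstacle here: once~\eqref{support function for xs acc} and the ``$\sigma_\cA=0$'' description of the barrier cone of a cone are available, the entire content is the trivial fact that a nonnegative element of $L^1$ with nonpositive expectation vanishes. The only points needing attention are checking that the hypotheses of Theorem~\ref{dual repr convex xs acc} are met so that~\eqref{support function for xs acc} may legitimately be invoked (convexity being supplied by coherence), and restricting throughout to $Z\in L^1_+$, which is both the regime in which~\eqref{support function for xs acc} is stated and automatic on $B(\cA)$ by Remark~\ref{barrier cone remark}.
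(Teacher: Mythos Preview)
Your proof is correct and follows essentially the same route as the paper: both combine the support-function identity~\eqref{support function for xs acc} with the description $B(\cA)=\{Z\in L^1_+:\sigma_\cA(Z)=0\}$ from Remark~\ref{barrier cone remark}, and then use the sandwich $\sigma_\cA(Z)\le -\E[X^-Z]\le 0$ (equivalently, the nonnegativity of $\E[X^-Z]$) to force equality. The paper's version is simply a more compressed presentation of the same argument.
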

\begin{proof}
It follows from the representation of $\sigma_\cA$ in~\eqref{support function for xs acc} that for all $X\in\cA$ and $Z\in L^1_+$
\begin{equation}
\sigma_\cA(Z) = \inf_{Y\in\cA}\E[-Y^- Z] \leq -\E[X^- Z] \leq 0\,.
\end{equation}
Since $B(\cA)=\{Z\in L^1_+ \,; \ \sigma_\cA(Z)=0\}$ by Remark~\ref{barrier cone remark}, we immediately obtain~\eqref{xs-coherent barrier}.
\end{proof}

\medskip

\begin{theorem}
\label{theorem coherent}
Let $\cA$ be a coherent, surplus-invariant acceptance set in $L^\infty$ that is closed with respect to the $\sigma(L^\infty,L^1)$ topology. Then there exists $A\in\cF$ such that
\begin{equation}
\label{equation coherent xs}
\cA=\SPAN(A)=\{X\in L^\infty \,; \ X1_A\geq0\}\,.
\end{equation}
\end{theorem}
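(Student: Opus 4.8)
The plan is to use the dual representation from Theorem~\ref{dual repr convex xs acc} together with the description of the barrier cone in Lemma~\ref{support for coherent xs-invariant}. Since $\cA$ is coherent, it is a cone, so $\sigma_\cA(Z)\in\{0,-\infty\}$ and the representation \eqref{formula dual repr convex xs acc} becomes $\cA=\bigcap_{Z\in B(\cA)}\{X\in L^\infty \,; \ \E[XZ]\geq 0\}$. By Lemma~\ref{support for coherent xs-invariant} we have $B(\cA)=\{Z\in L^1_+ \,; \ \E[X^-Z]=0 \text{ for all } X\in\cA\}$. The first step is to extract the scenario set $A$: I would like to find a single $Z_0\in L^1_+$ whose support, $A:=\{Z_0>0\}$, captures all of $B(\cA)$ in the sense that every $Z\in B(\cA)$ vanishes off $A$ and, conversely, the ``all of $A$'' test is already implied. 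Because $L^1$ is assumed separable, $B(\cA)$ (a convex cone in $L^1_+$) has a countable dense subset $\{Z_n\}$; setting $Z_0:=\sum_n 2^{-n}Z_n/(1+\|Z_n\|_1)\in L^1_+$ gives an element of $B(\cA)$ (it is a norm-convergent sum of elements of the closed convex cone $B(\cA)$, after checking $B(\cA)$ is norm-closed — which follows since $B(\cA)=\{\sigma_\cA=0\}$ and $\sigma_\cA$ is upper semicontinuous, hence $\{\sigma_\cA\geq 0\}$ is closed, while $\sigma_\cA\leq 0$ always) with $\{Z_0>0\}\supseteq\{Z_n>0\}$ for every $n$, hence $\{Z_0>0\}\supseteq\{Z>0\}$ for every $Z\in B(\cA)$ up to null sets.

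Put $A:=\{Z_0>0\}\in\cF$. I claim $\cA=\SPAN(A)$. For the inclusion $\cA\subseteq\SPAN(A)$: take $X\in\cA$. By Lemma~\ref{support for coherent xs-invariant}, $\E[X^-Z_0]=0$ with $Z_0>0$ on $A$, so $X^-1_A=0$ a.s., i.e. $X1_A\geq 0$, which is exactly $X\in\SPAN(A)$. For the reverse inclusion $\SPAN(A)\subseteq\cA$: take $X$ with $X1_A\geq 0$. Using the dual representation, it suffices to show $\E[XZ]\geq 0$ for every $Z\in B(\cA)$. Fix such a $Z$; since $Z\in L^1_+$ and $Z$ vanishes off $A$ (as $\{Z>0\}\subseteq A$ up to a null set), we get $\E[XZ]=\E[X1_A Z]\geq 0$ because $X1_A\geq 0$ and $Z\geq 0$. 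Hence $X\in\cA$. Combining the two inclusions gives \eqref{equation coherent xs}.

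The main obstacle I anticipate is the first step — producing the single dominating density $Z_0$ and justifying that $\{Z>0\}\subseteq\{Z_0>0\}$ for all $Z\in B(\cA)$ — which is precisely where the separability hypothesis on $L^1$ is used. One must verify that $B(\cA)$ is a norm-closed convex cone in $L^1_+$ (so that the countable convex combination $Z_0$ lands back in $B(\cA)$), and then that for $Z_0=\sum_n 2^{-n}Z_n/(1+\|Z_n\|_1)$ the support is the union of the supports of the $Z_n$, and that this union agrees a.s. with $\bigcup_{Z\in B(\cA)}\{Z>0\}$ by density (if $Z\in B(\cA)$ had mass on $\{Z_0=0\}$, approximating $Z$ in $L^1$ by $Z_{n_k}\to Z$ forces, along a subsequence, a.s. convergence, contradicting $Z_{n_k}=0$ on $\{Z_0=0\}$). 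Everything after that is the routine two-inclusion argument above, driven entirely by Lemma~\ref{support for coherent xs-invariant} and the conic dual representation. Finally, one should note the edge case $B(\cA)=\{0\}$: then the representation forces $\cA=L^\infty$, which is not proper, so this cannot occur for an acceptance set; equivalently $Z_0\neq 0$ and $A$ has positive measure — though in fact even $A=\emptyset$ would give $\SPAN(\emptyset)=L^\infty$, so properness of $\cA$ already rules that out and guarantees $\probp(A)>0$.
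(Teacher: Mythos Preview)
Your proposal is correct and follows essentially the same route as the paper: both arguments invoke the conic dual representation together with Lemma~\ref{support for coherent xs-invariant}, then use separability of $L^1$ to pass to a countable dense family $(Z_n)$ in $B(\cA)$ and take $A$ to be the union of their supports. The only cosmetic difference is that the paper works directly with the family $(Z_n)$ and the characterization $\cA=\bigcap_n\{X:\E[X^-Z_n]=0\}$, whereas you first aggregate the $Z_n$ into a single $Z_0\in B(\cA)$ with $A=\{Z_0>0\}$ and then run the two-inclusion argument via $\E[XZ]\geq 0$; both packagings are standard and equivalent.
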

\begin{proof}
The acceptance set $\cA$ can be represented as in~\eqref{formula dual repr convex xs acc} taking $\gamma:=\sigma_\cA$. As a consequence of Lemma~\ref{support for coherent xs-invariant}, it is easy to show that
\begin{equation}
\label{dual repr coherent xs acc}
\cA = \bigcap_{Z\in B(\cA)}\{X\in L^\infty \,; \ \E[X^-Z]=0\}\,.
\end{equation}
Since $L^1$ is separable, Corollary 3.5 in~\cite{AliprantisBorder2006} implies there exists a countable, dense subset $(Z_n)$ of $B(\cA)$. We claim that
\begin{equation}
\label{dual repr coherent xs acc via separab}
\cA = \bigcap_{n\in\N}\{X\in L^\infty \,; \ \E[X^-Z_n]=0\}\,.
\end{equation}
We only need to prove the inclusion ``$\supset$'' since the converse inclusion follows immediately from the representation~\eqref{dual repr coherent xs acc}. To this end, take $X\in L^\infty$ satisfying $\E[X^-Z_n]=0$ for all $n\in\N$ and let $Z\in B(\cA)$. By density, there exists a subsequence $(Y_n)$ of $(Z_n)$ converging to $Z$ in $L^1$. This implies that $X^-Y_n\to X^-Z$ in $L^1$, showing that $\E[X^-Z]=0$. As a result, it follows from~\eqref{dual repr coherent xs acc} that $X$ belongs to $\cA$, thus the dual representation of $\cA$ in~\eqref{dual repr coherent xs acc via separab} holds. Finally, setting $A:=\bigcup_{n\in\N}\{Z_n>0\}\in\cF$ we obtain
\begin{equation}
\cA = \bigcap_{n\in\N}\{X\in L^\infty \,; \ \probp(\{X<0\}\cap\{Z_n>0\})=0\} = \{X\in L^\infty \,; \ X1_A\geq0\},
\end{equation}
which concludes the proof.
\end{proof}

\medskip

Since the positive cone $L^\infty_+$ is the only SPAN acceptance set that is law invariant, we immediately obtain the following result.

\medskip

\begin{corollary}
Let $\cA$ be a coherent, surplus-invariant acceptance set in $L^\infty$ that is closed with respect to the $\sigma(L^\infty,L^1)$ topology. If $\cA$ is law invariant, then $\cA=L^\infty_+$.
\end{corollary}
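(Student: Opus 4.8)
The plan is to reduce the statement to the structural classification already obtained. By Theorem~\ref{theorem coherent}, since $\cA$ is a coherent, surplus-invariant, $\sigma(L^\infty,L^1)$-closed acceptance set, there exists $A\in\cF$ with $\cA=\SPAN(A)=\{X\in L^\infty\,;\ X1_A\ge0\}$. It therefore suffices to prove the auxiliary fact quoted before the statement, namely that a \emph{law-invariant} $\SPAN$ acceptance set must be $L^\infty_+$. I would first dispose of the two boundary cases: $\SPAN(A)$ fails to be proper when $\probp(A)=0$ (then $\SPAN(A)=L^\infty$), so since $\cA$ is a genuine acceptance set we must have $\probp(A)>0$; and $\SPAN(A)=L^\infty_+$ exactly when $\probp(A)=1$, because then $1_A=1$ almost surely, while conversely if $\probp(A)<1$ then $-1_{A^c}$ is a nonzero, nonpositive element of $\SPAN(A)$, so $\SPAN(A)\supsetneq L^\infty_+$. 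Hence the entire content of the corollary is to exclude $0<\probp(A)<1$.

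So assume $0<\probp(A)<1$ and argue by contradiction with law invariance. The key observation is that $-1_{A^c}\in\SPAN(A)$, since $(-1_{A^c})1_A=0$ almost surely. Now choose $B\in\cF$ with $\probp(B)=\probp(A^c)$ and $\probp(A\cap B)>0$; intuitively, one trades a small piece of $A$ against an equally massive piece of $A^c$. Then $-1_B$ is identically distributed with $-1_{A^c}$, but $(-1_B)1_A=-1_{A\cap B}$ is not almost surely nonnegative, so $-1_B\notin\SPAN(A)=\cA$. This contradicts the law invariance of $\cA$. Consequently $\probp(A)=1$, i.e. $A=\Omega$ up to a null set, and therefore $\cA=\SPAN(A)=L^\infty_+$. (Alternatively, the same construction shows that $\SPAN(A)$ with $0<\probp(A)<1$ is not sensitive in a law-invariant way, so one could instead route the conclusion through Remark~\ref{xs-invariance and sensitivity remark}; but this does not shorten the essential step.)

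The bookkeeping around $\SPAN$ sets and the two boundary cases is routine. The one point that genuinely needs care — and that I expect to be the main obstacle — is the existence of the rearranged indicator: one needs a set $B$ that is equidistributed with $A^c$ under $\probp$ yet charges $A$. This is precisely where the richness of the underlying probability space enters (it holds whenever $\probp$ is non-atomic, and more generally as long as $A^c$ is not the unique set of its own mass in the measure algebra), and it is the reason the result is phrased as an immediate corollary of the classification in Theorem~\ref{theorem coherent} rather than proved directly. Once this rearrangement is available, the contradiction above is immediate and the proof closes.
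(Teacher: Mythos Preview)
Your approach matches the paper's: invoke Theorem~\ref{theorem coherent} to write $\cA=\SPAN(A)$ and then observe that the only law-invariant $\SPAN$ acceptance set is $L^\infty_+$. The paper asserts this last fact in one line without argument; you supply a correct proof and, rightly, flag that the rearrangement step (finding $B$ with $\probp(B)=\probp(A^c)$ and $\probp(A\cap B)>0$) tacitly requires some richness of $(\Omega,\cF,\probp)$---nonatomicity suffices---a point the paper leaves implicit.
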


\medskip

\begin{remark}
The convex acceptance set based on shortfall risk introduced in Example~\ref{shortfall-risk example} is both surplus invariant and law invariant. This shows that requiring a convex, surplus-invariant acceptance set to be also law invariant is not as restrictive as in the coherent case.
\end{remark}

\section{Surplus invariance in finite dimensions}
\label{section: finite dimension}

An important case, in particular from a practical point of view, is the case where $\Omega$ is finite and $\cF$ is the discrete $\sigma$-algebra, i.e. the power set of $\Omega$. In this case $L^\infty$ can be identified with $\R^N$, where $N$ is the cardinality of $\Omega$. We show that closed, convex, surplus-invariant acceptance sets in $\R^N$ are essentially \textit{translates} of the positive cone $\R^N_+$. We start by adapting Lemma \ref{lemma: xs invariance and indicators} to the case $\Omega$ finite, omitting the proof.

\medskip

\begin{lemma}
\label{lemma finite 1}
An acceptance set $\cA\subset\R^N$ is surplus invariant if and only if $(\delta_1 x_1,\dots,\delta_N x_N)\in\cA$ for every $x\in\cA$ and $(\delta_1,\dots,\delta_N)\in\{0,1\}^N$.
\end{lemma}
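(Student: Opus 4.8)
The plan is to deduce this directly from the equivalence $(a)\Leftrightarrow(b)$ in Lemma~\ref{lemma: xs invariance and indicators}, after making explicit the dictionary between the finite-dimensional and the general setting. Since $\cF$ is the power set of $\Omega=\{1,\dots,N\}$, every subset $A\subseteq\Omega$ is measurable, and under the identification $L^\infty\cong\R^N$ the indicator $1_A$ corresponds to the vector $(\delta_1,\dots,\delta_N)\in\{0,1\}^N$ with $\delta_i=1$ if $i\in A$ and $\delta_i=0$ otherwise. Consequently, $X1_A$ is precisely the coordinatewise product $(\delta_1 x_1,\dots,\delta_N x_N)$, and $A\mapsto(\delta_1,\dots,\delta_N)$ is a bijection between $\cF$ and $\{0,1\}^N$.

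For the forward implication I would assume $\cA$ is surplus invariant, so that statement $(b)$ of Lemma~\ref{lemma: xs invariance and indicators} holds: $x\in\cA$ implies $x1_A\in\cA$ for every $A\in\cF$. Given $x\in\cA$ and $(\delta_1,\dots,\delta_N)\in\{0,1\}^N$, setting $A:=\{i\,;\,\delta_i=1\}$ yields $(\delta_1 x_1,\dots,\delta_N x_N)=x1_A\in\cA$, as desired.

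For the converse I would assume the coordinate-zeroing property and verify statement $(c)$ of Lemma~\ref{lemma: xs invariance and indicators}: given $x\in\cA$, pick $\delta_i=1$ for the indices $i$ with $x_i<0$ and $\delta_i=0$ otherwise; then $(\delta_1 x_1,\dots,\delta_N x_N)=-x^-\in\cA$, so $(c)$ holds and $\cA$ is surplus invariant. (Equivalently, the hypothesis is nothing but $(b)$ read through the correspondence above, and one invokes $(b)\Rightarrow(a)$.) There is no genuine obstacle here; the only point requiring a moment's care is the translation between subsets of the finite $\Omega$ and binary vectors, which is why the authors omit the routine details and everything else follows verbatim from Lemma~\ref{lemma: xs invariance and indicators}.
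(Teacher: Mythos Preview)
Your proposal is correct and follows exactly the approach the paper intends: the paper explicitly states that Lemma~\ref{lemma finite 1} is obtained by ``adapting Lemma~\ref{lemma: xs invariance and indicators} to the case $\Omega$ finite, omitting the proof,'' and your argument is precisely that adaptation via the bijection between subsets of $\Omega$ and binary vectors.
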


\medskip

For the remainder of this section we assume $\cA\subset\R^N$ is a closed, surplus-invariant acceptance set. Define for every $j\in\{1,\cdots,N\}$
\begin{equation}
v_j:=\inf_{x\in\cA}x_j\in [-\infty,0]\,.
\end{equation}
We denote by $e_j$, $j\in\{1,\cdots,N\}$, the canonical basis vectors of $\R^N$.

\medskip

\begin{lemma}
\label{unbounded components}
Assume $v_{j_1}=\dots=v_{j_R}=-\infty$ for $j_1,\dots,j_R\in\{1,\dots,N\}$ and  $R\in\{1,\dots N\}$, then $\sum_{i=1}^R x_{j_i}e_{j_i}\in\cA$ for every $x_{j_1},\dots,x_{j_R}\in\R$.
\end{lemma}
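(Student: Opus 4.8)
The plan is to show first that the whole coordinate axis $\R e_j$ lies in $\cA$ for every index $j$ with $v_j=-\infty$, and then to recover the full coordinate subspace spanned by $e_{j_1},\dots,e_{j_R}$ as a single convex combination of points lying on those axes.

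\textbf{Step 1 (the axis lines).} First I would prove that $v_j=-\infty$ forces $te_j\in\cA$ for every $t\in\R$. Fix such a $j$ and a real $t$. Since $v_j=\inf_{x\in\cA}x_j=-\infty$, the set of $j$-th coordinates of points of $\cA$ is unbounded below, so there is $x\in\cA$ with $x_j\le t$. Applying Lemma~\ref{lemma finite 1} with $\delta_j=1$ and $\delta_k=0$ for $k\ne j$ gives $x_je_j\in\cA$. Since $te_j\ge x_je_j$ coordinatewise (the two vectors agree in every coordinate other than $j$, and $t\ge x_j$), monotonicity of the acceptance set yields $te_j\in\cA$. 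Hence $\R e_{j_i}\subset\cA$ for every $i\in\{1,\dots,R\}$.

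\textbf{Step 2 (the coordinate subspace).} Now fix arbitrary reals $x_{j_1},\dots,x_{j_R}$. By Step~1 each of the $R$ points $Rx_{j_i}e_{j_i}$ lies in $\cA$, and since $\cA$ is convex,
\begin{equation*}
\sum_{i=1}^{R}x_{j_i}e_{j_i}=\sum_{i=1}^{R}\frac1R\bigl(Rx_{j_i}e_{j_i}\bigr)\in\cA,
\end{equation*}
which is exactly the assertion.

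The argument is short, so the only real issue is bookkeeping of hypotheses. Surplus invariance (via Lemma~\ref{lemma finite 1}) together with monotonicity buys us no more than the individual axis lines $\R e_{j_i}$; it is convexity of $\cA$ that is needed to glue these lines into the full coordinate subspace, and this is where I expect the ``work'' to be concentrated. Convexity is in fact indispensable here: in $\R^2$ the set $\{x\in\R^2:x_1\ge0\}\cup\{x\in\R^2:x_2\ge0\}$ is a closed, surplus-invariant acceptance set with $v_1=v_2=-\infty$ that is not all of $\R^2$. By contrast, closedness of $\cA$ plays no role in this particular lemma.
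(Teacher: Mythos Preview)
Your proof is correct and follows essentially the same approach as the paper's: first use $v_{j_i}=-\infty$ together with Lemma~\ref{lemma finite 1} (and monotonicity) to place each scaled axis vector $Rx_{j_i}e_{j_i}$ in $\cA$, then take the convex combination with equal weights $1/R$. Your Step~1 is in fact slightly more careful than the paper's version, which tacitly assumes one can find $y\in\cA$ with $y_{j_i}$ exactly equal to $Rx_{j_i}$ rather than only $\le$; your explicit appeal to monotonicity fills that small gap.
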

\begin{proof} Take $x_{j_1},\cdots,x_{j_R}\in\R$. By assumption for every $i\in\{1,\dots,R\}$ we find $y\in\cA$ such that $y_i=Rx_{j_i}$. By Lemma~\ref{lemma finite 1}, this implies $Rx_{j_i}e_{j_i}\in\cA$ and the convexity of $\cA$ then yields $\sum_{i=1}^R x_{j_i}e_{j_i}\in\cA$, as claimed.
\end{proof}

\medskip

Note that the above implies that if $v_j=-\infty$ for every  $j\in\{1,\dots N\}$ then $\cA=\R^N$, which is impossible since $\cA$ is proper. As a consequence, there exists $j\in\{1,\dots,N\}$ such that $v_j>-\infty$. After renumbering, if necessary, we may assume that there exists $1\le K\le N$ such that $v_1,\dots,v_K\in\R$ and $v_{K+1}= \dots =v_N=-\infty$. We denote by $\pi:\R^N\to\R^K$ the canonical projection defined by
\begin{equation}
\pi(x_1,\dots,x_N):=(x_1,\dots,x_K)\,.
\end{equation}
By $(a,z):=(a_1,\dots,a_K,z_1,\dots,z_{N-K})$, $a\in\R^K$ and $z\in\R^{N-K}$, we denote a generic element of $\R^K\times\R^{N-K}$.

\medskip

\begin{lemma}
The set $\pi(\cA)$ is a closed, convex, surplus-invariant acceptance set in $\R^K$ and
\begin{equation}
\label{finite reduction}
\cA=\pi(\cA)\times\R^{N-K}\,.
\end{equation}
\end{lemma}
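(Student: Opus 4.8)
The plan is to extract both assertions from the product decomposition~\eqref{finite reduction}, so I would prove that identity first. The inclusion $\cA\subseteq\pi(\cA)\times\R^{N-K}$ is immediate from the definition of $\pi$. For the converse, fix $a\in\pi(\cA)$ and $z\in\R^{N-K}$ and choose $w\in\R^{N-K}$ with $(a,w)\in\cA$. The point is to use that the coordinates $K+1,\dots,N$ are ``free'': since $v_{K+1}=\dots=v_N=-\infty$, Lemma~\ref{unbounded components} applied to these indices gives $(0,z')\in\cA$ for every $z'\in\R^{N-K}$, where $0$ denotes the origin of $\R^K$. Then, using convexity of $\cA$ (which is in force throughout the section), for each $t\in[0,1)$
\[
t(a,w)+(1-t)(0,z')=\bigl(ta,\,tw+(1-t)z'\bigr)\in\cA,
\]
and the choice $z':=(z-tw)/(1-t)\in\R^{N-K}$ turns the right-hand side into $(ta,z)$. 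Letting $t\uparrow1$ and invoking closedness of $\cA$ yields $(a,z)\in\cA$, which proves~\eqref{finite reduction}.

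Granting the decomposition, the claimed properties of $\pi(\cA)$ follow almost mechanically. Convexity holds since $\pi$ is linear and $\cA$ is convex; $\pi(\cA)$ is nonempty because $\cA$ is, and it is proper because $\inf_{a\in\pi(\cA)}a_1=v_1$ is finite, so $\pi(\cA)\neq\R^K$; monotonicity is obtained by lifting, namely if $b\ge a$ in $\R^K$ and $(a,w)\in\cA$ then $(b,w)\ge(a,w)\in\cA$, so $b\in\pi(\cA)$. For surplus invariance I would use Lemma~\ref{lemma finite 1} in $\R^N$: given $a\in\pi(\cA)$ with a witness $(a,w)\in\cA$ and $(\delta_1,\dots,\delta_K)\in\{0,1\}^K$, apply it to the vector $(\delta_1,\dots,\delta_K,1,\dots,1)$ to get $(\delta_1a_1,\dots,\delta_Ka_K,w)\in\cA$, then project and invoke the converse direction of Lemma~\ref{lemma finite 1}. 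Finally, closedness of $\pi(\cA)$ drops out of~\eqref{finite reduction}: if $a^{(n)}\in\pi(\cA)$ and $a^{(n)}\to a$, then $(a^{(n)},0)\in\pi(\cA)\times\R^{N-K}=\cA$, so $(a,0)\in\cA$ by closedness of $\cA$, whence $a\in\pi(\cA)$.

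The one genuinely non-routine point is the reverse inclusion in~\eqref{finite reduction}: a priori the linear image of a closed convex set need not be closed, and the argument must produce the entire fibre $\{a\}\times\R^{N-K}$ over every $a\in\pi(\cA)$. This is exactly what the ``unbounded directions'' furnished by Lemma~\ref{unbounded components}, combined with convexity and the limiting step using closedness, deliver. Once~\eqref{finite reduction} is established, the otherwise delicate question of whether $\pi(\cA)$ is closed is settled for free, and the acceptance-set axioms together with surplus invariance are immediate consequences of Lemma~\ref{lemma finite 1}.
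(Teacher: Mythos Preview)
Your proof is correct and follows essentially the same approach as the paper: both combine Lemma~\ref{unbounded components} with convexity and closedness to obtain the reverse inclusion in~\eqref{finite reduction}, and both reduce closedness and surplus invariance of $\pi(\cA)$ to the corresponding properties of $\cA$ via Lemma~\ref{lemma finite 1}. The only cosmetic differences are that the paper first establishes the properties of $\pi(\cA)$ (using Lemma~\ref{lemma finite 1} to get $(a,0)\in\cA$ directly) and only afterwards the product decomposition, whereas you reverse the order and work with a general witness $(a,w)$; neither choice affects the substance of the argument.
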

\begin{proof}
Clearly, $\pi(\cA)$ is a nonempty, proper subset of $\R^K$. Take any $a\in\pi(\cA)$ so that there exists $z\in\R^{N-K}$ such that $(a,z)\in\cA$. If $b\in\R^K$ with $b\ge a$, then $(a,z)\le (b,z)$ and we infer from the monotonicity of $\cA$ that $(b,z)\in\cA$. This in turn implies $b\in\pi(\cA)$, hence $\pi(\cA)$ is an acceptance set. Moreover, it is convex by the linearity of $\pi$. To see that $\pi(\cA)$ is surplus invariant, take $a\in\pi(\cA)$. By Lemma~\ref{lemma finite 1} we obtain $(a,0)\in\cA$. Since $\cA$ is surplus invariant, it follows that $(-a^-,0)\in\cA$, hence $-a^-\in\pi(\cA)$ showing that $\pi(\cA)$ is surplus invariant by Lemma \ref{lemma: xs invariance and indicators}. Finally, to prove that $\pi(\cA)$ is closed take a sequence $(a^n)$ in $\pi(\cA)$ converging to some $a\in\R^K$. Note that $(a^n,0)\in\cA$ again by Lemma~\ref{lemma finite 1}. Hence, since $\cA$ is closed, we must have $(a,0)\in\cA$ so that $a\in\pi(\cA)$.

\smallskip

Since $\cA\subset\pi(\cA)\times\R^{N-K}$ is obviously true, to prove~\eqref{finite reduction} we just need to establish the converse inclusion. To that effect take $(a,z)\in\pi(\cA)\times\R^{N-K}$. From Lemma~\ref{unbounded components} it follows that $(0,\alpha z)\in\cA$ for every $\alpha>0$. Moreover $(a,0)\in\cA$ by Lemma~\ref{lemma finite 1}. The convexity of $\cA$ now implies $((1-\lambda)a,\lambda\alpha z)\in\cA$ for every $\lambda\in(0,1)$ and every $\alpha>0$. In particular, $((1-\lambda)a,z)\in\cA$ for all $\lambda\in(0,1)$. Since $\cA$ is closed, letting $\lambda\to0$ we conclude $(a,z)\in\cA$.
\end{proof}

\medskip

For $\cA\subset\R^N$, we denote by $\cone(\cA)$ the smallest convex cone containing $\cA$ and by $\overline{\cone}(\cA)$ its closure. The next result shows that surplus-invariant acceptance sets in finite dimensions are essentially translates of the positive cone.

\medskip

\begin{proposition}
Set $w:=(v_1,\dots,v_K,0,\dots,0)$. Then
\begin{equation}
\overline{\cone}(\cA-w)=\R^K_+\times\R^{N-K}\,.
\end{equation}
\end{proposition}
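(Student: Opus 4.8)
The plan is to use the decomposition $\cA=\pi(\cA)\times\R^{N-K}$ already established, together with the fact that for the ``bounded'' coordinates $1,\dots,K$ the infimum $v_j$ is finite and \emph{attained in the closure}, since $\cA$ is closed. First I would reduce to showing $\overline{\cone}(\pi(\cA)-(v_1,\dots,v_K))=\R^K_+$; the extra factor $\R^{N-K}$ comes for free because $\cA-w=(\pi(\cA)-(v_1,\dots,v_K))\times\R^{N-K}$ and the cone generated by a product of the form $C\times\R^{N-K}$ with $0\in\overline{C}$ is $\cone(C)\times\R^{N-K}$ (the second factor is already a linear subspace). So from now on I work in $\R^K$, write $\cB:=\pi(\cA)$, $v:=(v_1,\dots,v_K)$, and note $\cB$ is a closed, convex, surplus-invariant acceptance set with $\inf_{a\in\cB}a_j=v_j\in\R$ for each $j$.

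The inclusion $\overline{\cone}(\cB-v)\subset\R^K_+$ is the easy direction: every $a\in\cB$ satisfies $a_j\ge v_j$ by definition of $v_j$, so $a-v\in\R^K_+$, hence $\cone(\cB-v)\subset\R^K_+$, and $\R^K_+$ is closed. For the reverse inclusion $\R^K_+\subset\overline{\cone}(\cB-v)$ it suffices to show each basis vector $e_j$ lies in $\overline{\cone}(\cB-v)$, since $\R^K_+=\cone\{e_1,\dots,e_K\}$ and the right-hand side is a closed convex cone. Fix $j$. By definition of $v_j$ as an infimum, pick a sequence $a^n\in\cB$ with $a^n_j\to v_j$. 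Applying Lemma~\ref{lemma finite 1} with the coordinate selector that keeps only coordinate $j$, we get $a^n_j e_j\in\cB$; passing to the limit and using that $\cB$ is closed gives $v_j e_j\in\cB$. Now I need to climb back \emph{up} in coordinate $j$: for any $t>0$, monotonicity of $\cB$ (an acceptance set) applied to $v_j e_j\le v_j e_j+t e_j$ yields $(v_j+t)e_j\in\cB$, so $t e_j=(v_j+t)e_j-v_j e_j$. But $(v_j+t)e_j-v\ne t e_j$ in general, because $v$ has nonzero entries in coordinates $i\ne j$; so I instead observe $(v_j+t)e_j$ has $i$-th coordinate $0\ge v_i$ for $i\ne j$, hence $(v_j+t)e_j - v$ has $i$-th coordinate $-v_i\ge0$ and $j$-th coordinate $t$, i.e. it equals $t e_j + \sum_{i\ne j}(-v_i)e_i\in\cone(\cB-v)$.

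To finish I extract $e_j$ itself. Using Lemma~\ref{lemma finite 1} once more on the point $(v_j+t)e_j\in\cB$ with the selector keeping only coordinate $j$ is vacuous (it already is supported on coordinate $j$), so instead I scale: $\cone(\cB-v)$ being a cone, $t e_j+\sum_{i\ne j}(-v_i)e_i\in\cone(\cB-v)$ for every $t>0$, and dividing by $t$ gives $e_j+\tfrac1t\sum_{i\ne j}(-v_i)e_i\in\cone(\cB-v)$; letting $t\to\infty$ and using closedness of $\overline{\cone}(\cB-v)$ yields $e_j\in\overline{\cone}(\cB-v)$, as desired. Hence $\R^K_+\subset\overline{\cone}(\cB-v)$, proving equality in $\R^K$, and the product structure gives the claimed identity in $\R^N$. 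The main subtlety — really the only place any care is needed — is that $v$ is not a multiple of $e_j$, so one cannot directly read off $t e_j$ from $(v_j+t)e_j\in\cB$; the fix is the elementary observation above that $(v_j+t)e_j-v$ lies in the $\R^K_+$-orthant-shifted cone and then a limiting argument in $t$ cleans up the unwanted $-v_i$ terms.
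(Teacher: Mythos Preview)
Your argument is correct and mirrors the paper's: both show $(v_j+t)e_j\in\cA$ for $t>0$, translate by $-w$, rescale, and pass to the limit to isolate $e_j$; your $t\to\infty$ is exactly the paper's $\lambda\to0$ with $m=1/\lambda$, and the ``unwanted'' term $\tfrac1t\sum_{i\ne j}(-v_i)e_i$ is precisely the paper's $\lambda(v_je_j-w)$. The only difference is cosmetic---you first split off the $\R^{N-K}$ factor using $\cA=\pi(\cA)\times\R^{N-K}$, whereas the paper keeps all $N$ coordinates and handles $j>K$ separately via Lemma~\ref{unbounded components}.

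One minor correction: the parenthetical hypothesis $0\in\overline{C}$ for $C=\cB-v$ can fail. For instance, with $K=2$ and $\cB=\{a\in\R^2:\|a^-\|_2\le1\}$ (which is a closed, convex, surplus-invariant acceptance set) one gets $v=(-1,-1)$ and $\|v^-\|_2=\sqrt{2}>1$, so $v\notin\cB$ and hence $0\notin C$. This does not damage your proof, because the identity you actually need, $\overline{\cone}(C\times\R^{N-K})=\overline{\cone}(C)\times\R^{N-K}$, holds for every nonempty $C\subset\R^K$ regardless of whether $0\in\overline{C}$; simply drop the hypothesis.
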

\begin{proof}
Clearly $\cA\subset \{a\in\R^K \,; \ a_j\ge v_j, \ \forall j=1,\dots,K\}\times\R^{N-K}$. Hence, $\overline{\cone}(\cA-w)\subset\R^K_+\times\R^{N-K}$. Note that for each $j=1,\cdots,K$ we have $(v_j+m)e_j\in\cA$ for every $m\ge 0$, which implies that $\lambda(v_je_j-w)+\lambda me_j$ belongs to $\cone(\cA-w)$ for every $\lambda>0$ and $m\ge 0$. Choosing $m:=\frac{1}{\lambda}$ we see that $\lambda(v_je_j-w)+e_j$ also belongs to $\cone(\cA-w)$ for every $\lambda>0$ so that, letting $\lambda\to 0$, we obtain $e_j\in\overline{\cone}(\cA-w)$ for all $j=1,\dots,K$. Since, by Lemma~\ref{unbounded components}, the vectors $\pm e_j\in\cA$ for $j=K+1,\dots,N$ we obtain $\R^K_+\times\R^{N-K}\subset\overline{\cone}(\cA-w)$. This concludes the proof.
\end{proof}

\medskip

\begin{remark}
A corresponding result does not hold in the case where $(\Omega,\cF,\probp)$ is an infinite probability space. Indeed for any sequence $(A_n)$ of pairwise disjoint, measurable sets of positive probability define
\begin{equation}
\cA:=\{X\in L^\infty \,; \ X1_{A_n}\ge -n1_{A_n}, \ \forall \ n\in\N\}\,.
\end{equation}
The set $\cA$ is easily seen to be a convex, surplus-invariant acceptance set that is $\sigma(L^\infty,L^1)$ closed. However, there cannot exist $W\in L^\infty$ with $\cA-W\subset L^\infty_+,$ since this would imply that the elements in $\cA$ are uniformly bounded from below.
\end{remark}

\section{Positive, surplus-invariant risk measures}

So far we have focused our attention on surplus-invariant acceptance sets $\cA\subset L^\infty$ and surplus-invariant risk measures of the form $\rho_{\cA,S}$, with eligible asset $S$. In this last section we seek to clarify the relationship between these objects and the class of risk measures studied by Cont, Deguest \& He in~\cite{ContDeguestHe2013} and by Staum in~\cite{Staum2013}, focusing on those properties we deem most relevant from a capital adequacy perspective. To enable the comparison, we work with cash-additive risk measures, i.e. the eligible asset is taken to be $S=(1,1_\Omega)$. Recall that in this case we write $\rho_\cA$ instead of $\rho_{\cA,S}$.

\medskip

\begin{remark}
Let $\cA\subset L^\infty$ be an acceptance set, then $\rho_\cA$ is well known to be finitely valued and Lipschitz continuous. Moreover, $\rho_\cA=\rho_{\overline{\cA}}$ and $\overline{\cA}=\{X\in L^\infty \,; \ \rho_\cA(X)\le 0\}$. In the sequel, we use these facts without further reference.
\end{remark}

\medskip

\begin{definition}
A risk measure $\rho$ is called a {\em positive risk measure} if it is normalized, i.e. $\rho(0)=0$, and if it takes only positive real values.
\end{definition}

\medskip

\begin{remark}
\label{remark on staum cont definitions}
\begin{enumerate}[(i)]
\item Positive, surplus-invariant risk measures coincide with the {\em shortfall} risk measures introduced in~\cite{Staum2013}. This is because, for a positive risk measure, surplus invariance is equivalent to
\begin{equation}
\label{staum xs invariance}
\rho(X)=\rho(-X^-) \ \ \ \mbox{for every} \ X\in L^\infty\,.
\end{equation}
\item The {\em loss-based} risk measures introduced in~\cite{ContDeguestHe2013} are positive, surplus-invariant risk measures $\rho$ satisfying the {\em cash-loss} property: $\rho(-\alpha)=\alpha$ for every $\alpha\ge 0$.
\end{enumerate}
\end{remark}


\subsection{Truncations of risk measures of the form $\rho_\cA$}

No risk measure of the form $\rho_\cA$ can be positive, since cash additivity implies that $\rho_\cA$ is unbounded from below.
There are, however, two natural ways of constructing positive risk measures from $\rho_\cA$. All but one of the examples considered in~\cite{ContDeguestHe2013} and~\cite{Staum2013}, which we recall below, rely on one of these two construction principles.

\medskip

\begin{example}
\label{upper and lower trunctations}
Let $\cA\subset L^\infty$ be an acceptance set.
\begin{enumerate}
  \item  The map $\rho_\cA^\wedge:L^\infty\to[0,\infty)$ is defined by setting
\begin{equation}
\rho_\cA^\wedge(X):=\rho_\cA(X\wedge 0)=\rho_\cA(-X^-) \ \ \ \mbox{for} \ X\in L^\infty\,.
\end{equation}
If $\rho_\cA(0)=0$, then $\rho_\cA^\wedge$ is a positive, surplus-invariant risk measure. This risk measure was called {\em loss-based version} of $\rho_\cA$ in~\cite{ContDeguestHe2013} and {\em excess-invariant counterpart} of $\rho_\cA$ in~\cite{Staum2013}.

\item The map $\rho_\cA^\vee:L^\infty\to[0,\infty)$ is defined by setting
\begin{equation}
\rho_\cA^\vee(X):=\rho_\cA(X)\vee0=\max\{\rho_\cA(X),0\} \ \ \ \mbox{for} \ X\in L^\infty\,.
\end{equation}
If $\rho_\cA(0)\le 0$, then $\rho_\cA^\vee$ is a positive risk measure that is, however, not always surplus invariant. In Proposition~\ref{wedge=vee} below we characterize when this is the case.
\end{enumerate}
\end{example}

\medskip

\begin{example}
\label{cont-staum examples}
The following list covers the main examples of risk measures $\rho$ treated in~\cite{ContDeguestHe2013} and in~\cite{Staum2013}. Note that all these risk measures can be written as $\rho_\cA^\wedge$ where $\cA\subset L^\infty$ is the (not necessarily surplus-invariant) acceptance set $\cA(\rho)$. For more details we refer to~\cite{ContDeguestHe2013} and~\cite{Staum2013}.
\begin{enumerate}
\item (Scenario-based margin requirements,~\cite{ContDeguestHe2013}) The \textit{scenario-based margin requirement}, or \textit{$\SPAN$ risk measure}, associated to $A\in\cF$ is the positive, surplus-invariant risk measure defined by
\begin{equation}
\rho(X):=\esssup(1_AX^-)\,,
\end{equation}
where $\esssup(X)$ denotes the essential supremum of a random variable $X\in L^\infty$.

\item (Put option premium,~\cite{ContDeguestHe2013}) The \textit{put option premium} is the positive, surplus-invariant risk measure defined by
\begin{equation}
\rho(X):=\E[X^-] \ \ \ \mbox{for} \ X\in L^\infty\,.
\end{equation}
This risk measure was studied by Jarrow in \cite{Jarrow2002}.

\item (Spectral loss measure,~\cite{ContDeguestHe2013}) Let $\varphi:(0,1)\to[0,\infty)$ be a decreasing function with $\int^{1}_{0}\varphi(x)dx=1$. The \textit{spectral loss measure} associated to $\varphi$ is the positive, surplus-invariant risk measure defined by
\begin{eqnarray}
\label{def spectral rm}
\rho(X)&:=&\int^{1}_{0}(\VaR_\beta(X)\vee0)\,\varphi(\beta)\,d\beta \\
&=& \int^{1}_{0}\VaR_\beta(-X^-)\,\varphi(\beta)\,d\beta \ \ \ \mbox{for} \ X\in L^\infty\,.
\end{eqnarray}
The corresponding nontruncated version was introduced in Acerbi~\cite{Acerbi2002}. Note that the equality in \eqref{def spectral rm} is due to the surplus invariance of Value-at-Risk.
In particular, setting $\varphi(\beta):=\frac{1}{\alpha}1_{(0,\alpha)}(\beta)$ for a given $\alpha\in(0,1)$ we obtain
\begin{equation}
\rho(X):=\TVaR_\alpha(-X^-) \ \ \ \mbox{for} \ X\in L^\infty\,,
\end{equation}
which is referred to as \textit{expected tail-loss} in~\cite{ContDeguestHe2013}.

\item (Shortfall risk,~\cite{Staum2013}) Let $u:\R\to\R$ be an increasing function such that $u(0)=0$. The \textit{shortfall risk measure} is the positive, surplus-invariant risk measure defined by
\begin{equation}
\rho(X):=-\E[u(-X^-)] \ \ \ \mbox{for} \ X\in L^\infty\,.
\end{equation}
This type of risk measures was introduced by F\"{o}llmer and Schied in~\cite{FoellmerSchied2002}.
\end{enumerate}
\end{example}

\medskip

The following example in~\cite{ContDeguestHe2013} is of interest because it displays a positive risk measure that cannot be written in the form $\rho_\cA^\wedge$ for any acceptance set $\cA\subset L^\infty$.

\medskip

\begin{example}[Loss certainty equivalent,~\cite{ContDeguestHe2013}]
\label{example: loss certainty equivalent}
Let $u:[0,\infty)\to\R$ be strictly increasing and strictly convex. The \textit{loss certainty equivalent} is the positive, surplus-invariant risk measure defined by
\begin{equation}
\rho(X):=u^{-1}(\E[u(X^-)]) \ \ \ \mbox{for} \ X\in L^\infty\,.
\end{equation}
As explained in~\cite{ContDeguestHe2013}, the loss certainty equivalent cannot be typically written in the form $\rho_\cA^\wedge$ for a suitable acceptance set.
\end{example}

\medskip

We now provide a characterization of when risk measures of the form $\rho_\cA^\vee$ are surplus invariant, establishing a link with surplus-invariant acceptance sets.

\medskip

\begin{proposition}
\label{wedge=vee}
Let $\cA\subset L^\infty$ be an acceptance set. The following statements are equivalent:
\begin{enumerate}[(a)]
\item $\rho_\cA^\vee$ is surplus invariant;
\item $\overline{\cA}$ is surplus invariant.
\end{enumerate}
If $\rho_\cA(0)=0$, then \textit{(a)} and \textit{(b)} above are equivalent to
 \begin{enumerate}
\item[(c)] $\rho_\cA^\vee=\rho_\cA^\wedge$.
\end{enumerate}
\end{proposition}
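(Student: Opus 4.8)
The plan is to reduce everything to the two elementary facts recorded just before the statement: $\rho_\cA=\rho_{\overline\cA}$, and $\overline\cA=\{X\in L^\infty\,;\ \rho_\cA(X)\le 0\}$ is itself a closed acceptance set (it is $\cA(\rho_\cA)$). From the definition of $\rho_\cA^\vee$ one then reads off that $\rho_\cA^\vee(X)=0$ precisely when $X\in\overline\cA$, and $\rho_\cA^\vee(X)=\rho_\cA(X)$ otherwise; and, since $\rho_\cA^\vee$ never takes negative values, ``$\rho_\cA^\vee$ is surplus invariant'' means exactly that $\rho_\cA^\vee(X)=\rho_\cA^\vee(-X^-)$ for every $X\in L^\infty$. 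Throughout I would combine these observations with the criterion from Lemma~\ref{lemma: xs invariance and indicators} applied to $\overline\cA$: $\overline\cA$ is surplus invariant if and only if $-X^-\in\overline\cA$ whenever $X\in\overline\cA$.

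For \textit{(b)}$\Rightarrow$\textit{(a)}: assuming $\overline\cA$ surplus invariant, Proposition~\ref{xs-invariant risk measure}\textit{(i)} applied to the closed set $\overline\cA$ (with the risk-free asset) gives that $\rho_\cA=\rho_{\overline\cA}$ is a surplus-invariant risk measure. I would then split on the sign of $\rho_\cA(X)$. If $\rho_\cA(X)\le 0$, then $X\in\overline\cA$, hence $-X^-\in\overline\cA$ by Lemma~\ref{lemma: xs invariance and indicators}, so $\rho_\cA^\vee(X)=0=\rho_\cA^\vee(-X^-)$. If $\rho_\cA(X)>0$, then surplus invariance of $\rho_\cA$ gives $\rho_\cA(-X^-)=\rho_\cA(X)>0$, so neither $\max\{\cdot,0\}$ is active and $\rho_\cA^\vee(X)=\rho_\cA(X)=\rho_\cA(-X^-)=\rho_\cA^\vee(-X^-)$. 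For \textit{(a)}$\Rightarrow$\textit{(b)}: given $X\in\overline\cA$, we have $\rho_\cA^\vee(X)=0$, so surplus invariance of $\rho_\cA^\vee$ forces $\rho_\cA^\vee(-X^-)=0$, i.e.\ $\rho_\cA(-X^-)\le 0$, i.e.\ $-X^-\in\overline\cA$; Lemma~\ref{lemma: xs invariance and indicators} then yields that $\overline\cA$ is surplus invariant.

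For the equivalence with \textit{(c)} under $\rho_\cA(0)=0$: the key remark is that $-X^-\le 0$ together with monotonicity of $\rho_\cA$ gives $\rho_\cA(-X^-)\ge\rho_\cA(0)=0$, so $\rho_\cA^\wedge(X)=\rho_\cA(-X^-)=\rho_\cA^\vee(-X^-)$ for every $X$. Hence if $\rho_\cA^\vee$ is surplus invariant then $\rho_\cA^\vee(X)=\rho_\cA^\vee(-X^-)=\rho_\cA^\wedge(X)$ for all $X$, which is \textit{(c)}; conversely $\rho_\cA^\wedge$ is surplus invariant (as already recorded in Example~\ref{upper and lower trunctations}, or directly since $(-X^-)^-=X^-$), so \textit{(c)} makes $\rho_\cA^\vee$ surplus invariant, which is \textit{(a)}. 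I do not expect a genuine obstacle: the whole argument is bookkeeping around $\rho_\cA^\vee=\max\{\rho_\cA,0\}$ and $\overline\cA=\{\rho_\cA\le 0\}$. The two points to watch are that on the region $\{X\,;\ \rho_\cA(X)>0\}$ in \textit{(b)}$\Rightarrow$\textit{(a)} one must genuinely use surplus invariance of the untruncated risk measure $\rho_\cA$ (equivalently, Proposition~\ref{xs-invariant risk measure}), not merely monotonicity of $\overline\cA$; and that the hypothesis $\rho_\cA(0)=0$ is exactly what forces $\rho_\cA^\wedge$ to take values in $[0,\infty)$ and hence to coincide with $X\mapsto\rho_\cA^\vee(-X^-)$.
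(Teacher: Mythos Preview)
Your proof is correct and follows essentially the same route as the paper's: both directions of \textit{(a)}$\Leftrightarrow$\textit{(b)} are handled identically (via $\overline{\cA}=\{\rho_\cA\le 0\}$, Lemma~\ref{lemma: xs invariance and indicators}, and Proposition~\ref{xs-invariant risk measure}), and the treatment of \textit{(c)} uses the same ingredients. Your handling of \textit{(c)} is in fact a touch more streamlined than the paper's: by first isolating the identity $\rho_\cA^\wedge(X)=\rho_\cA^\vee(-X^-)$ (valid whenever $\rho_\cA(0)=0$), you reduce \textit{(a)}$\Leftrightarrow$\textit{(c)} to a one-line substitution, whereas the paper re-runs the case split on the sign of $\rho_\cA(X)$ to show \textit{(b)}$\Rightarrow$\textit{(c)} directly.
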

\begin{proof}
Assume \textit{(a)} holds. For any $X\in\overline{\cA}$ we have $\rho_{\cA}(X)\le 0$, hence $\rho_\cA^\vee(X)= 0$. Since $\rho_\cA^\vee$ is surplus invariant, this implies $\rho_\cA^\vee(-X^-)= 0$. Thus, $\rho_\cA(-X^-)\le 0$ so that $-X^-\in\overline{\cA}$ and \textit{(b)} holds by Lemma~\ref{lemma: xs invariance and indicators}. Hence, \textit{(a)} implies \textit{(b)}.

\smallskip

Conversely, assume \textit{(b)} holds. Then $\rho_{\overline{\cA}}$ is surplus invariant by Proposition~\ref{xs-invariant risk measure}, hence so is $\rho_\cA$ since $\rho_\cA=\rho_{\overline{\cA}}$. Now, take $X\in L^\infty$. If $\rho_\cA(X)\geq0$ then $\rho_\cA^\vee(-X^-)=\rho_\cA^\vee(X)$ follows since $\rho_\cA$ is surplus invariant. Otherwise, if $\rho_\cA(X)<0$ we have $X\in\cA$ so that $-X^-\in\cA$ by the surplus invariance of $\cA$. Hence, $\rho_\cA(-X^-)\leq0$ implying that $\rho_\cA^\vee(-X^-)=0=\rho_\cA^\vee(X)$. In conclusion, \textit{(b)} implies \textit{(a)}.

\smallskip

Assume now that $\rho_\cA(0)=0$ and that \textit{(b)} holds. Since $\rho_\cA=\rho_{\overline{\cA}}$, the risk measure $\rho_\cA$ is surplus invariant again by Proposition~\ref{xs-invariant risk measure}. If $\rho_\cA(X)\ge 0$ then $\rho_\cA^\vee(X)=\rho_\cA(X)=\rho_\cA(-X^-)=\rho_\cA^\wedge(X)$. On the other hand, if $\rho_\cA(X)<0$ then $X\in\cA$ and surplus invariance implies that $-X^-\in\overline{\cA}$. It follows that $0=\rho_\cA(0)\le \rho_\cA(-X^-)\le 0$ and, consequently, $\rho_\cA^\wedge(X)=0=\rho_\cA^\vee(X)$. Hence, \textit{(b)} implies \textit{(c)}. Finally, since $\rho_\cA^\wedge$ is always surplus invariant, \textit{(c)} implies \textit{(a)}, concluding the proof.
\end{proof}


\subsubsection*{Cash additivity subject to positivity}

In this section we establish when a positive, surplus-invariant risk measure is of the form $\rho_\cA^\vee$ for some surplus-invariant acceptance set $\cA\subset L^\infty$. To this end, we recall the property of cash additivity subject to positivity introduced in~\cite{Staum2013}.

\medskip

\begin{definition}
A positive risk measure $\rho:L^\infty\to[0,\infty)$ is said to be {\em cash additive subject to positivity} if
\begin{equation}
\label{casp}
\rho(X+\alpha) = \rho(X) - \alpha
\end{equation}
holds for every $X\in L^\infty$ and $\alpha\in\R$ such that $\rho(X)>\alpha\vee0$.
\end{definition}

\medskip

\begin{proposition}
\label{characterization casp and truncated}
For a positive, surplus-invariant risk measure $\rho:L^{\infty}\to[0,\infty)$ the following statements are equivalent:
\begin{enumerate}[(a)]
  \item $\rho=\rho_\cA^\vee$ for some surplus-invariant acceptance set $\cA\subset L^\infty$;
  \item $\rho$ is cash additive subject to positivity.
\end{enumerate}
The acceptance set in \textit{(a)} can always be taken to be $\cA:=\cA(\rho)$.
\end{proposition}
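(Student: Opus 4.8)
The plan is to show that $\cA:=\cA(\rho)$ always works, which simultaneously proves the equivalence and the last sentence of the proposition. Note first that, since $\rho$ is positive, $\cA(\rho)=\{X\in L^\infty\,;\ \rho(X)\le 0\}=\{X\in L^\infty\,;\ \rho(X)=0\}$, and this is an acceptance set by the Remark following the definition of a risk measure.

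For the implication \textit{(a)} $\Rightarrow$ \textit{(b)} I would argue directly. Assume $\rho=\rho_{\cB}^\vee$ for some surplus-invariant acceptance set $\cB$, and fix $X\in L^\infty$ and $\alpha\in\R$ with $\rho(X)>\alpha\vee 0$. Then $\rho(X)>0$, so $\rho_{\cB}^\vee(X)=\max\{\rho_{\cB}(X),0\}>0$ forces $\rho_{\cB}(X)=\rho(X)$; since $\rho(X)-\alpha>0$, the cash additivity of $\rho_{\cB}$ gives $\rho_{\cB}(X+\alpha)=\rho(X)-\alpha>0$, whence $\rho(X+\alpha)=\rho_{\cB}^\vee(X+\alpha)=\rho(X)-\alpha$, which is \eqref{casp}. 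This is a routine computation and does not even use surplus invariance of $\cB$.

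For \textit{(b)} $\Rightarrow$ \textit{(a)} with $\cA=\cA(\rho)$, I would proceed in two steps. First, $\cA$ is surplus invariant: since $\rho$ is a positive, surplus-invariant risk measure, Remark~\ref{remark on staum cont definitions}\textit{(i)} gives $\rho(X)=\rho(-X^-)$ for all $X$, so $X\in\cA$ (i.e. $\rho(X)=0$) implies $-X^-\in\cA$, and Lemma~\ref{lemma: xs invariance and indicators} yields surplus invariance. Second, $\rho=\rho_{\cA}^\vee$: if $\rho(X)=0$ then $X\in\cA$, hence $\rho_{\cA}(X)\le 0$ and $\rho_{\cA}^\vee(X)=0=\rho(X)$, so everything reduces to proving $\rho_{\cA}(X)=\rho(X)$ whenever $\rho(X)>0$.

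The hard part is this last identity, and it is the only point at which cash additivity subject to positivity enters; the subtlety is that the constraint $\rho(X)>\alpha\vee 0$ in \eqref{casp} is strict, so one cannot simply plug in $\alpha=\rho(X)$. For ``$\ge$'': if $X+m\in\cA$, i.e. $\rho(X+m)=0$, then $m<0$ contradicts monotonicity ($\rho(X)\le\rho(X+m)=0$), and $0\le m<\rho(X)$ contradicts \eqref{casp} ($\rho(X+m)=\rho(X)-m>0$); hence $m\ge\rho(X)$, so $\rho_{\cA}(X)\ge\rho(X)$. For ``$\le$'': for each $\delta\in(0,\rho(X))$ put $\alpha:=\rho(X)-\delta$, so $\rho(X)>\alpha\vee 0=\alpha$ and \eqref{casp} gives $\rho\big(X+\rho(X)-\delta\big)=\delta$; monotonicity of $\rho$ then gives $\rho\big(X+\rho(X)\big)\le\delta$, and letting $\delta\downarrow 0$ together with $\rho\ge 0$ forces $\rho\big(X+\rho(X)\big)=0$, i.e. $X+\rho(X)\in\cA$, so $\rho_{\cA}(X)\le\rho(X)$. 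Combining the two inequalities gives $\rho_{\cA}(X)=\rho(X)>0$, hence $\rho_{\cA}^\vee(X)=\rho(X)$, and the proof is complete with $\cA=\cA(\rho)$.
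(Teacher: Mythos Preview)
Your proof is correct and follows essentially the same approach as the paper's: both directions hinge on applying \eqref{casp} with $\alpha=\rho(X)-\e$ (your $\delta$) and passing to the limit to deduce $\rho(X+\rho(X))=0$ and $\rho_{\cA(\rho)}(X)\ge\rho(X)$. Your argument is slightly more explicit in two places---you verify surplus invariance of $\cA(\rho)$ (which the paper leaves implicit here), and your ``$\ge$'' direction avoids the limiting argument by directly ruling out $m<\rho(X)$---but these are minor presentational differences rather than a different route.
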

\begin{proof}
The cash additivity of $\rho_\cA$ readily implies that $\rho_\cA^\vee$ is cash additive subject to positivity. Hence, \textit{(a)} implies \textit{(b)}. Now assume \textit{(b)} holds, and take $X\in L^\infty$. We claim that $\rho=\rho_{\cA(\rho)}^\vee$. If $\rho(X)=0$, then $\rho_{\cA(\rho)}(X)\leq0$ and thus $\rho(X)=0=\rho^\vee_{\cA(\rho)}(X)$. If $\rho(X)>0$, it follows by cash-additivity subject to positivity that for all $\e>0$
\begin{equation}
\label{repr casp: auxiliary}
0 \leq \rho(X+\rho(X)) \le \rho(X+\rho(X)-\e) = \rho(X)-\rho(X)+\e=\e\,.
\end{equation}
In particular, $X+\rho(X)-\e\notin\cA(\rho)$ so that $\rho_{\cA(\rho)}(X)\geq\rho(X)-\e$ for all $\e>0$. Letting $\e\to0$, we obtain that $\rho_{\cA(\rho)}(X)\geq\rho(X)$. Moreover, letting $\e\to0$ in~\eqref{repr casp: auxiliary} we also get $\rho(X+\rho(X))=0$ so that $\rho_{\cA(\rho)}(X)\leq\rho(X)$. This shows that $\rho(X)=\rho_{\cA(\rho)}(X)=\rho^\vee_{\cA(\rho)}(X)$, concluding the proof that \textit{(b)} implies \textit{(a)}.
\end{proof}

\medskip

\begin{remark}
The previous result holds for any positive risk measure $\rho:L^\infty\to[0,\infty)$ if we do not require that the acceptance set in \textit{(a)} is surplus invariant.
\end{remark}


\subsubsection*{Cash-loss additivity}

The property of cash-loss additivity introduced in~\cite{ContDeguestHe2013} characterizes positive, surplus-invariant risk measures of the form $\rho_\cA^\wedge$.

\medskip

\begin{definition}
A positive risk measure $\rho:L^\infty\to[0,\infty)$ is said to be {\em cash-loss additive} if for every $X\leq0$ and $\alpha>0$ it satisfies
\begin{equation}
\rho(X-\alpha)=\rho(X)+\alpha\,.
\end{equation}
\end{definition}

\medskip

The following result is proved in Section 4.2 of~\cite{ContDeguestHe2013} under a convexity assumption that is, however, unnecessary for the proof.

\medskip

\begin{proposition}
\label{representation rho wedge}
Let $\rho:L^\infty\to[0,\infty)$ be a positive, surplus-invariant risk measure. The following are equivalent:
\begin{enumerate}[(a)]
  \item $\rho=\rho_\cA^\wedge$ for some surplus-invariant acceptance set $\cA\subset L^\infty$;
  \item $\rho$ is cash loss additive.
\end{enumerate}
The acceptance set in \textit{(a)} can always be taken to be $\cA:=\cA(\rho)$.
\end{proposition}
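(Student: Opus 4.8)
The plan is to mirror the proof of Proposition~\ref{characterization casp and truncated}, replacing the upper truncation $(\cdot)^\vee$ by the lower truncation $(\cdot)^\wedge$ and taking the acceptance set in \textit{(a)} to be $\cA(\rho)$.

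For \textit{(a)}$\Rightarrow$\textit{(b)} the argument is short: if $\rho=\rho_\cA^\wedge$ with $\cA$ surplus invariant, then, recalling that $\rho_\cA$ is finitely valued and cash additive and that $-X^-=X$ for $X\le 0$, one gets $\rho(X)=\rho_\cA(-X^-)=\rho_\cA(X)$ for $X\le 0$ and, since $X-\alpha\le 0$ as well, $\rho(X-\alpha)=\rho_\cA(X-\alpha)=\rho_\cA(X)+\alpha=\rho(X)+\alpha$ for $\alpha>0$; hence $\rho$ is cash-loss additive.

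For \textit{(b)}$\Rightarrow$\textit{(a)} I would set $\cA:=\cA(\rho)$, which equals $\{X\in L^\infty\,;\ \rho(X)=0\}$ because $\rho\ge 0$. This is an acceptance set, and it is surplus invariant: if $\rho(X)=0$ then surplus invariance of $\rho$ (which applies since $\rho(X)\ge 0$) gives $\rho(-X^-)=\rho(X)=0$, i.e.\ $-X^-\in\cA$, so Lemma~\ref{lemma: xs invariance and indicators} applies. To show $\rho=\rho_\cA^\wedge$, observe $\rho(X)=\rho(-X^-)$ and $\rho_\cA^\wedge(X)=\rho_\cA(-X^-)$, so it is enough to prove $\rho(Y)=\rho_\cA(Y)$ for every $Y\le 0$; fix such a $Y$ and put $c:=\rho(Y)\ge 0$. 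The inequality $\rho_\cA(Y)\ge c$ should follow by showing $\rho(Y+m)>0$ for $m<c$: for $m<0$ this is cash-loss additivity ($\rho(Y+m)=\rho(Y)-m>c\ge 0$); for $m=0$ either $c>0$, or $c=0$ and then $Y\in\cA$ which (together with the previous line) already gives $\rho_\cA(Y)=0=c$; and for $0<m<c$ one uses $\rho\ge 0$ and surplus invariance to write $\rho(Y+m)=\rho((Y+m)\wedge 0)$ and notes that if this vanished, then cash-loss additivity applied to $(Y+m)\wedge 0\le 0$ with $\alpha=m$ would give $\rho(Y\wedge(-m))=\rho((Y+m)\wedge 0)+m=m$, contradicting $\rho(Y\wedge(-m))\ge\rho(Y)=c>m$ (as $Y\wedge(-m)\le Y$).

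The remaining inequality $\rho_\cA(Y)\le c$ is where I expect the real work to be. Assuming $c>0$, it amounts to showing $\rho(Y+c)=0$ (then $Y+c\in\cA$). By surplus invariance $\rho(Y+c)=\rho((Y+c)\wedge 0)$, and since $((Y+c)\wedge 0)-c=Y\wedge(-c)$, cash-loss additivity gives $\rho(Y\wedge(-c))=\rho(Y+c)+c$; monotonicity already gives $\rho(Y\wedge(-c))\ge\rho(Y)=c$, so what is needed is the reverse bound $\rho(Y\wedge(-c))\le c$ --- equivalently the ``cash compatibility on the negative cone'' identity $\rho(Y+\rho(Y))=0$ for $Y\le 0$. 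The hard part is that cash-loss additivity only governs shifts that stay inside the negative cone, so one must control the possibly nonzero positive part of $Y+\rho(Y)$ and reduce it, via surplus invariance, to a situation where the axiom applies. Once this is in place, combining the two inequalities yields $\rho_\cA(Y)=c=\rho(Y)$ for all $Y\le 0$, hence $\rho=\rho_{\cA(\rho)}^\wedge$, which also establishes the final sentence of the statement.
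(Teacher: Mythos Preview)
The paper does not supply its own proof of Proposition~\ref{representation rho wedge}; it simply refers to Section~4.2 of~\cite{ContDeguestHe2013}, noting that the convexity assumption there is unnecessary. So there is no in-paper argument to compare against, and the question is whether your proposal stands on its own.

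Your argument for \textit{(a)}$\Rightarrow$\textit{(b)} is correct (and in fact does not use surplus invariance of $\cA$ at all). For \textit{(b)}$\Rightarrow$\textit{(a)} you set $\cA:=\cA(\rho)$ and reduce everything to the identity $\rho(Y+\rho(Y))=0$ for $Y\le 0$, which you flag as the ``hard part'' but do not prove. This is a genuine gap, and it cannot be closed: the put option premium $\rho(X)=\E[X^-]$ (Example~\ref{cont-staum examples}) is positive, surplus invariant and cash-loss additive, yet for $Y=-1_A$ with $0<\probp(A)<1$ one has $\rho(Y)=\probp(A)$ and
\[
\rho\bigl(Y+\rho(Y)\bigr)=\E\bigl[(1_A-\probp(A))^+\bigr]=\probp(A)\bigl(1-\probp(A)\bigr)>0\,.
\]
Equivalently, $\cA(\rho)=L^\infty_+$ here, so $\rho_{\cA(\rho)}^\wedge(X)=\esssup(X^-)$, which is \emph{not} the put option premium. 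In fact, since $\rho_\cA^\wedge$ determines $\rho_\cA$ on $L^\infty_-$ and hence (by cash additivity) on all of $L^\infty$, the only closed $\cA$ with $\rho=\rho_\cA^\wedge$ is $\{X:\E[X]\ge 0\}$, which is not surplus invariant.

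The upshot is that your approach via $\cA(\rho)$ is not merely incomplete but cannot work, and the statement as printed --- both the qualifier ``surplus-invariant'' in \textit{(a)} and the last sentence about $\cA(\rho)$ --- appears to be in error. The equivalence that \emph{does} hold is with \textit{(a)} reading ``$\rho=\rho_\cA^\wedge$ for some acceptance set $\cA$'', and the right construction for \textit{(b)}$\Rightarrow$\textit{(a)} is to extend $\rho$ from $L^\infty_-$ to a cash-additive map on all of $L^\infty$ via $\tilde\rho(X):=\rho\bigl((X-c)\wedge 0\bigr)-c$ for any $c\ge\esssup(X)$ (well-defined by cash-loss additivity), and then take $\cA:=\{X:\tilde\rho(X)\le 0\}$.
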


\begin{remark}
The acceptance set in \textit{(a)} always satisfies $\rho_\cA(0)=0$, which is necessary for $\rho_\cA^\wedge$ to be a positive risk measure.
\end{remark}

\subsection{Capital adequacy: Acceptability and operational meaning}
\label{subsection: accept and oper meaning}

In this section we focus on two fundamental questions that need to be addressed when interpreting a positive risk measure $\rho$ as a capital requirement: What is the acceptability criterion implicit in $\rho$? What is the operational meaning of $\rho(X)$?


\subsubsection*{Acceptability}

As already mentioned, when a positive risk measure $\rho$ is interpreted as a capital requirement, there is a natural capital adequacy test, or acceptability criterion, associated with it: Acceptable positions are precisely those positions that do not require any additional capital, i.e. those positions belonging to
\begin{equation}
\cA(\rho):=\{X\in L^\infty \,; \ \rho(X)=0\}\,.
\end{equation}

\medskip

The acceptability criteria induced by positive, surplus-invariant risk measures are always surplus invariant. This simple result establishes a key link between the type of risk measures considered in~\cite{ContDeguestHe2013} and~\cite{Staum2013} and the theory developed in this paper.

\medskip

\begin{lemma}
If $\rho:L^\infty\to[0,\infty)$ is a positive, surplus-invariant risk measure, then the acceptance set $\cA(\rho)$ is surplus invariant.
\end{lemma}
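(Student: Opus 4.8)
The plan is to reduce the claim to condition \textit{(c)} of Lemma~\ref{lemma: xs invariance and indicators} and then read off the conclusion directly from the defining property of a surplus-invariant risk measure. First I would recall that $\cA(\rho)=\{X\in L^\infty \,;\ \rho(X)\le 0\}$ is an acceptance set, as recorded in the remark following the definition of a risk measure; since a positive risk measure takes only positive values, this set coincides with $\{X\in L^\infty \,;\ \rho(X)=0\}$, which is exactly the acceptability criterion $\cA(\rho)$ under consideration here. So $\cA(\rho)$ is already known to be an acceptance set, and only the surplus-invariance property remains to be verified.

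Next, by Lemma~\ref{lemma: xs invariance and indicators} it suffices to check that $X\in\cA(\rho)$ implies $-X^-\in\cA(\rho)$. So I would fix $X\in\cA(\rho)$, i.e.\ $\rho(X)=0$. Because $\rho$ is positive we have $\rho(X)=0\ge 0$, so the defining identity~\eqref{surplus inv rm} of a surplus-invariant risk measure applies and yields $\rho(-X^-)=\rho(X)=0$, that is, $-X^-\in\cA(\rho)$. Lemma~\ref{lemma: xs invariance and indicators} (equivalence of \textit{(c)} and \textit{(a)}) then gives that $\cA(\rho)$ is surplus invariant.

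There is essentially no obstacle: the only point to watch is that the hypothesis $\rho(X)\ge 0$ required by~\eqref{surplus inv rm} be met, which is automatic for a positive risk measure (equivalently, one may simply invoke Remark~\ref{remark on staum cont definitions}\textit{(i)}, which notes that for positive risk measures surplus invariance is equivalent to $\rho(X)=\rho(-X^-)$ for \emph{all} $X\in L^\infty$). The result is thus elementary, but it is precisely the bridge linking the risk-measure-based frameworks of~\cite{ContDeguestHe2013} and~\cite{Staum2013} to the acceptance-set-based framework developed in this paper.
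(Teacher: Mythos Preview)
Your proof is correct and follows essentially the same route as the paper's: take $X\in\cA(\rho)$, use surplus invariance (applicable since $\rho(X)=0\ge 0$) to get $\rho(-X^-)=0$, and conclude via Lemma~\ref{lemma: xs invariance and indicators}. The paper's proof is simply a more compressed version of what you wrote.
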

\begin{proof}
If $X\in\cA(\rho)$ then $\rho(-X^-)=\rho(X)=0$, showing that $-X^-\in\cA(\rho)$. Hence, $\cA(\rho)$ is surplus invariant by Lemma~\ref{lemma: xs invariance and indicators}.
\end{proof}

\medskip

The following result plays a critical role in understanding the implicit acceptability criterion associated with the positive, surplus-invariant risk measures considered in~\cite{ContDeguestHe2013} and~\cite{Staum2013}. Recall that a risk measure $\rho$ is called sensitive if $\rho(X)>0$ for all nonzero $X\in L^\infty$ such that $X\leq0$.

\medskip

\begin{proposition}
\label{triviality with sensitivity}
Let $\rho:L^\infty\to[0,\infty)$ be a positive, surplus-invariant risk measure. If $\rho$ is sensitive, then $\cA(\rho)=L^\infty_+$.
\end{proposition}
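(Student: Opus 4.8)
The statement is an equality of sets, so the plan is to prove the two inclusions $L^\infty_+\subseteq\cA(\rho)$ and $\cA(\rho)\subseteq L^\infty_+$ separately. The key observation to keep in mind throughout is that, for a \emph{positive} risk measure, surplus invariance is the unconditional identity $\rho(X)=\rho(-X^-)$ for every $X\in L^\infty$; this is exactly the content of Remark~\ref{remark on staum cont definitions}(i), and it is the only property beyond positivity and sensitivity that the argument needs.

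For the inclusion $L^\infty_+\subseteq\cA(\rho)$, I would take $X\geq0$. Then $X^-=0$, so $-X^-=0$, and surplus invariance together with normalization gives $\rho(X)=\rho(-X^-)=\rho(0)=0$, i.e.\ $X\in\cA(\rho)$. (This inclusion in fact holds for any positive, surplus-invariant risk measure, without sensitivity.)

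For the reverse inclusion $\cA(\rho)\subseteq L^\infty_+$, I would take $X\in\cA(\rho)$, so $\rho(X)=0$. Surplus invariance yields $\rho(-X^-)=\rho(X)=0$. Now $-X^-\leq0$, so if $-X^-$ were nonzero, sensitivity of $\rho$ would force $\rho(-X^-)>0$, a contradiction. Hence $-X^-=0$, equivalently $X^-=0$, which means $X\geq0$, i.e.\ $X\in L^\infty_+$.

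Combining the two inclusions gives $\cA(\rho)=L^\infty_+$. There is no real obstacle here: the whole argument is a two-step unwinding of the definitions, the only thing to be careful about being the use of the unconditional form of surplus invariance valid for positive risk measures (rather than the conditional form $\rho(X)=\rho(-X^-)$ for $\rho(X)\geq0$ from the general definition), which is harmless since $\rho\geq0$ everywhere anyway.
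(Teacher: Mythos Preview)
Your proof is correct and follows essentially the same approach as the paper's: prove the two inclusions separately, using surplus invariance plus sensitivity for the nontrivial direction $\cA(\rho)\subseteq L^\infty_+$. The only cosmetic difference is that for the easy inclusion $L^\infty_+\subseteq\cA(\rho)$ the paper invokes monotonicity ($0\leq\rho(X)\leq\rho(0)=0$ for $X\geq0$) rather than surplus invariance, but both arguments are immediate and equivalent in effect.
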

\begin{proof}
Clearly, we have $L^\infty_+\subset\cA(\rho)$. Indeed, if $X\in L^\infty_+$ then $0\leq\rho(X)\leq\rho(0)=0$, showing that $X\in\cA(\rho)$. To prove the converse inclusion, take $X\in\cA(\rho)$ and assume $\probp(X<0)>0$ so that $X^-$ is nonzero. Since $\rho$ is sensitive, it follows that $\rho(X)=\rho(-X^-)>0$ and, therefore, $X\notin\cA(\rho)$.
\end{proof}

\medskip

\begin{remark}
The put option premium and the spectral loss measure in Example~\ref{cont-staum examples}, as well as the loss certainty equivalent in Example~\ref{example: loss certainty equivalent}, are sensitive. As a result, the preceding proposition implies that the corresponding acceptance sets collapse down to the positive cone. This is also true for shortfall risk introduced in Example~\ref{cont-staum examples} whenever $u$ is strictly increasing.
\end{remark}

\medskip

The case of positive, surplus-invariant risk measures of the form $\rho_\cA^\vee$ and $\rho_\cA^\wedge$ is particularly interesting. Here, the acceptance set $\cA$ is specified in advance. It is therefore natural to investigate whether the original acceptability criterion is preserved when making the transition from $\rho_\cA$ to $\rho_\cA^\vee$ and $\rho_\cA^\wedge$, respectively. When passing from $\rho_\cA$ to $\rho_\cA^\vee$ the acceptance set remains the same up to closure. However, as Proposition~\ref{truncated risk measures and acceptability} shows, in the transition to $\rho_\cA^\wedge$ the underlying acceptability criterion is maintained up to closure if and only if $\overline{\cA}$ is surplus invariant. In this case $\rho_\cA^\wedge$ and $\rho_\cA^\vee$ coincide by Proposition~\ref{wedge=vee}. If $\overline{\cA}$ is not surplus invariant, the acceptability criterion may be drastically altered as shown by Example~\ref{example drastic change} below. As a consequence, since the acceptance set $\cA$ already embodies the attitude of regulators towards risk, it seems that risk measures of the form $\rho_\cA^\wedge$ are less interesting from a capital adequacy perspective than those of the form $\rho_\cA^\vee$.

\medskip

\begin{proposition}
\label{truncated risk measures and acceptability}
Let $\cA\subset L^\infty$ be an acceptance set. The following statements hold:
\begin{enumerate}[(i)]
 \item $\cA(\rho_\cA^\vee)=\overline{\cA}$;
 \item If $\rho_\cA(0)=0$, then $\cA(\rho_\cA^\wedge)\subset\overline{\cA}$. Moreover, $\cA(\rho_\cA^\wedge)=\overline{\cA}$ if and only if $\overline{\cA}$ is surplus invariant.
 \end{enumerate}
\end{proposition}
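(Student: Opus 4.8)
The plan is to unwind the definitions and lean on the two facts recorded earlier --- $\rho_\cA=\rho_{\overline{\cA}}$ and $\overline{\cA}=\{X\in L^\infty : \rho_\cA(X)\le 0\}$ --- together with the indicator characterization of surplus invariance from Lemma~\ref{lemma: xs invariance and indicators}. Part~(i) is immediate: $X\in\cA(\rho_\cA^\vee)$ means $\max\{\rho_\cA(X),0\}=0$, which holds precisely when $\rho_\cA(X)\le 0$, i.e.\ when $X\in\overline{\cA}$; no hypothesis beyond the definitions is used.

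For part~(ii) assume $\rho_\cA(0)=0$. I would first record the elementary observation that, by monotonicity of $\rho_\cA$ and $-X^-\le 0$, one always has $\rho_\cA(-X^-)\ge\rho_\cA(0)=0$; hence $\rho_\cA^\wedge(X)=\rho_\cA(-X^-)=0$ is equivalent to $-X^-\in\overline{\cA}$. The inclusion $\cA(\rho_\cA^\wedge)\subset\overline{\cA}$ then follows from monotonicity once more: if $\rho_\cA^\wedge(X)=0$, then from $-X^-\le X$ we get $\rho_\cA(X)\le\rho_\cA(-X^-)=0$, so $X\in\overline{\cA}$.

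For the equivalence I would invoke Lemma~\ref{lemma: xs invariance and indicators} in both directions. If $\overline{\cA}$ is surplus invariant and $X\in\overline{\cA}$, that lemma yields $-X^-\in\overline{\cA}$, hence $\rho_\cA^\wedge(X)=0$ by the observation above, i.e.\ $X\in\cA(\rho_\cA^\wedge)$; combined with the inclusion just proved this gives $\cA(\rho_\cA^\wedge)=\overline{\cA}$. Conversely, if $\cA(\rho_\cA^\wedge)=\overline{\cA}$, then every $X\in\overline{\cA}$ satisfies $\rho_\cA^\wedge(X)=0$, whence $-X^-\in\overline{\cA}$; since this holds for all $X\in\overline{\cA}$, Lemma~\ref{lemma: xs invariance and indicators} shows that $\overline{\cA}$ is surplus invariant.

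Every step is a one-line manipulation of monotone maps and of the definition of $\cA(\rho)$, so I do not anticipate a genuine obstacle. The only point that deserves a moment's care is the normalization-driven inequality $\rho_\cA(-X^-)\ge 0$: it is precisely what turns the identity $\rho_\cA^\wedge(X)=0$ into the membership $-X^-\in\overline{\cA}$ and thereby lets Lemma~\ref{lemma: xs invariance and indicators} carry the equivalence.
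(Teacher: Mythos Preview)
Your proof is correct and follows essentially the same line as the paper's. Part~(i) and the ``only if'' direction of part~(ii) are identical in spirit to the paper's argument. The one minor difference is in the ``if'' direction: the paper invokes Proposition~\ref{wedge=vee} to conclude that $\rho_\cA^\vee=\rho_\cA^\wedge$ when $\overline{\cA}$ is surplus invariant, and then appeals to part~(i), whereas you argue directly via Lemma~\ref{lemma: xs invariance and indicators}. Your route is slightly more self-contained, since it does not rely on the earlier characterization of when the two truncations coincide; the paper's route, on the other hand, makes the connection to Proposition~\ref{wedge=vee} explicit. Both are equally short.
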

\begin{proof}
\textit{(i)} This follows immediately since
\begin{equation}
\label{auxiliary, consequence of cont of cash add rm}
\overline{\cA} = \{X\in L^\infty \,; \ \rho_\cA(X)\leq0\} = \{X\in L^\infty \,; \ \rho_\cA^\vee(X)=0\} = \cA(\rho_\cA^\vee)\,.
\end{equation}
\textit{(ii)} Since $\cA(\rho_\cA^\wedge)=\{X\in L^\infty \,; \ \rho_\cA(-X^-)=0\}$, we immediately obtain $\cA(\rho_\cA^\wedge)\subset\overline{\cA}$. Thus, we only need to characterize when equality holds. The ``if'' part follows from part \textit{(i)} since in this case $\rho_\cA^\vee=\rho_\cA^\wedge$ by Proposition~\ref{wedge=vee}. To prove the ``only if'' part assume $\overline{\cA}\subset\cA(\rho_\cA^\wedge)$. Hence, for any $X\in\overline{\cA}$ we have $\rho_\cA(-X^-)=\rho_\cA^\wedge(X)=0$. This implies that $-X^-\in\overline{\cA}$, so that $\overline{\cA}$ is surplus invariant by Lemma~\ref{lemma: xs invariance and indicators}.
\end{proof}

\medskip

\begin{example}
\label{example drastic change}
Let $\cA\subset L^\infty$ be an acceptance set.
\begin{enumerate}
 \item Assume $\cA$ is closed and sensitive, and $\rho_\cA(0)=0$. Then $\cA(\rho_\cA^\wedge)=L^\infty_+$. This follows from \ref{triviality with sensitivity} since in that case $\rho^\wedge_\cA$ is sensitive.
 \item Assume $\cA$ is $\sigma(L^\infty,L^1)$ closed and coherent, and $\rho_\cA(0)=0$. Then $\cA(\rho_\cA^\wedge)=\SPAN(A)$ for some $A\in\cF$. This is a direct consequence of Theorem \ref{theorem coherent}, because the acceptance set $\cA(\rho_\cA^\wedge)$ is easily seen to be itself $\sigma(L^\infty,L^1)$ closed and coherent.
\end{enumerate}
\end{example}


\subsubsection*{Operational meaning}

Let $\rho$ be a positive risk measure. In a capital adequacy context the quantity $\rho(X)$ is interpreted as the cost of making $X$ acceptable. Therefore, to provide $\rho$ with an operational meaning we need to specify of which action that makes $X$ acceptable is $\rho(X)$ the cost. In the absence of such a specification, $\rho(X)$ cannot be interpreted as a capital requirement in any practical way.

\medskip

Consider an acceptance set $\cA\subset L^\infty$. As already mentioned, by its very definition the cash-additive risk measure $\rho_\cA$ has a clear operational meaning: For any $X\in L^\infty$, the quantity $\rho_\cA(X)$ is the ``minimum'' amount of capital that has to be raised and added to $X$ to reach acceptability (up to closure of $\cA$). In this sense, the identity
\begin{equation}
\label{cash additivity identity}
\rho_\cA(X+\rho_\cA(X))=0 \ \ \ \mbox{for all} \ X\in L^\infty
\end{equation}
may be viewed as the key operational property. The following property -- which is also highlighted in Section 1.1 of~\cite{ContDeguestHe2013} -- is the natural generalization of~\eqref{cash additivity identity} to the context of positive risk measures.

\medskip

\begin{definition}
A positive risk measure $\rho:L^\infty\to[0,\infty)$ is called {\em cash compatible} if
\begin{equation}
\label{self balancing formula}
\rho(X+\rho(X)) = 0 \ \ \ \mbox{for all} \ X\in L^\infty\,.
\end{equation}
\end{definition}

\medskip

\begin{remark}
Cash compatibility has the following operational interpretation: If $\rho:L^\infty\to[0,\infty)$ is a cash-compatible positive risk measure and $X\in L^\infty$ an arbitrary capital position, then we can make $X$ acceptable by adding $\rho(X)$ to it. Hence, the risk measure, understood as a capital requirement, is compatible with the management action of adding cash to a position to make it acceptable.
\end{remark}

\medskip

\begin{remark}
\label{sb truncated rm}
Let $\cA\subset L^\infty$ be an acceptance set.
\begin{enumerate}[(i)]
\item Since $\rho_\cA$ is cash compatible, it follows that $\rho_\cA^\vee$ always satisfies~\ref{self balancing formula}. As a result, Proposition \ref{characterization casp and truncated} implies that every positive risk measure that is cash additive subject to positivity is automatically cash compatible.
\item The risk measure $\rho_\cA^\wedge$ does not always satisfy~\ref{self balancing formula}. For instance, the put option premium defined in Example~\ref{cont-staum examples} is easily seen not to have this property.
\end{enumerate}
\end{remark}

\medskip

We recall the concept of cash subadditivity introduced by El Karoui and Ravanelli in \cite{ElKarouiRavanelli2009}. For a discussion of this axiom in the context of risk measures of the form $\rho_{\cA,S}$ we refer to~\cite{FarkasKochMunari2013b}.

\medskip

\begin{definition}
A risk measure $\rho:L^\infty\to\overline{\R}$ is called {\em cash subadditive} if
\begin{equation}
\rho(X+\alpha) \geq \rho(X)-\alpha \ \ \ \mbox{for all} \ \alpha>0\,.
\end{equation}
\end{definition}

\medskip

\begin{remark}
\label{remark cash sub}
Let $\cA\subset L^\infty$ be an acceptance set.
\begin{enumerate}[(i)]
  \item The risk measure $\rho_\cA^\vee$ is always cash subadditive, as also shown in Proposition 4.3 in \cite{Staum2013}. As a consequence of Proposition \ref{characterization casp and truncated}, every positive, surplus-invariant risk measure is cash subadditive whenever it is cash additive subject to positivity.
  \item The risk measure $\rho_\cA^\wedge$ is always cash subadditive. Indeed, for $X\in L^\infty$ and $\alpha>0$ we have $-(X+\alpha)^-\leq-X^-+\alpha$ implying
\begin{equation}
\rho_\cA^\wedge(X+\alpha) = \rho_\cA(-(X+\alpha)^-) \geq \rho_\cA(-X^-+\alpha) = \rho_\cA^\wedge(X)-\alpha\,.
\end{equation}
As a consequence of Proposition \ref{representation rho wedge}, every positive, surplus-invariant risk measures that is cash-loss additive is automatically cash subadditive.
  \item The loss-based risk measures considered in~\cite{ContDeguestHe2013} are cash subadditive whenever convex, as shown in Section 2.1 in~\cite{ContDeguestHe2013}.
\end{enumerate}
\end{remark}

\medskip

The following theorem shows that, under the operationally relevant assumption of being cash compatible, every positive, surplus-invariant risk measure that is cash subadditive can be represented in the form $\rho_\cA^\vee$ for some surplus-invariant acceptance set $\cA\subset L^\infty$.

\medskip

\begin{theorem}
\label{repr rho vee under self balancing}
Let $\rho:L^\infty\to [0,\infty)$ be a positive, surplus-invariant risk measure that is cash compatible. Then, the following statements are equivalent:
\begin{enumerate}[(a)]
\item $\rho=\rho_\cA^\vee$ for some surplus-invariant acceptance set $\cA\subset L^\infty$;
\item $\rho$ is cash subadditive.
\end{enumerate}
The acceptance set in \textit{(a)} can always be chosen to be $\cA(\rho)$.
\end{theorem}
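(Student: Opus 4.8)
The plan is to prove the two implications separately, using the already-established machinery for $\rho^\vee_\cA$. The implication \textit{(a)}$\Rightarrow$\textit{(b)} is essentially free: if $\rho=\rho^\vee_\cA$ for some acceptance set $\cA$, then $\rho$ is cash subadditive by Remark~\ref{remark cash sub}(i), which holds for any acceptance set. So the entire content is in \textit{(b)}$\Rightarrow$\textit{(a)}, and the natural candidate — as the theorem statement already announces — is $\cA:=\cA(\rho)$.

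For the direction \textit{(b)}$\Rightarrow$\textit{(a)}, first I would record that $\cA(\rho)$ is a surplus-invariant acceptance set: it is nonempty (it contains $0$ since $\rho(0)=0$) and proper, monotone because $\rho$ is decreasing, and surplus invariant by the Lemma stating that $\rho$ positive and surplus invariant forces $\cA(\rho)$ to be surplus invariant (via $\rho(-X^-)=\rho(X)$). So it only remains to show $\rho=\rho^\vee_{\cA(\rho)}$, i.e. $\rho(X)=\max\{\rho_{\cA(\rho)}(X),0\}$ for every $X\in L^\infty$. Since $\rho\ge 0$ everywhere, the case $\rho(X)=0$ is immediate: then $X\in\cA(\rho)$, hence $\rho_{\cA(\rho)}(X)\le 0$, so $\rho^\vee_{\cA(\rho)}(X)=0=\rho(X)$. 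The work is in the case $\rho(X)>0$, where I must show $\rho_{\cA(\rho)}(X)=\rho(X)$ (this is positive, so the truncation does nothing).

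To get $\rho_{\cA(\rho)}(X)\le\rho(X)$: cash compatibility gives $\rho(X+\rho(X))=0$, i.e. $X+\rho(X)\in\cA(\rho)$, so by definition of $\rho_{\cA(\rho)}$ as an infimum over $m$ with $X+m\in\cA(\rho)$ we get $\rho_{\cA(\rho)}(X)\le\rho(X)$. For the reverse inequality $\rho_{\cA(\rho)}(X)\ge\rho(X)$, I would argue that no $m<\rho(X)$ can make $X+m$ acceptable. Fix $\e>0$ with $\e<\rho(X)$ and set $\alpha:=\rho(X)-\e>0$. Applying cash subadditivity to $Y:=X+\alpha$ and the scalar $\alpha$, or more directly iterating the inequality $\rho(X+\alpha)\ge\rho(X)-\alpha$, gives $\rho(X+\alpha)\ge\rho(X)-\alpha=\e>0$, hence $X+\alpha=X+\rho(X)-\e\notin\cA(\rho)$; therefore $\rho_{\cA(\rho)}(X)\ge\rho(X)-\e$ for every such $\e$, and letting $\e\to 0$ yields $\rho_{\cA(\rho)}(X)\ge\rho(X)$. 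Combining the two bounds gives $\rho_{\cA(\rho)}(X)=\rho(X)>0$, so $\rho^\vee_{\cA(\rho)}(X)=\rho(X)$, completing the proof.

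I expect the main (and only real) obstacle to be making sure cash subadditivity alone — which only controls $\rho(X+\alpha)$ from below for $\alpha>0$ — suffices to rule out acceptability of $X+m$ for $m<\rho(X)$; this is exactly why one needs to combine it with cash compatibility, which pins down $\rho(X+\rho(X))=0$ and thereby anchors the infimum from above. One should double-check that cash subadditivity is used in the form $\rho(X+\alpha)\ge\rho(X)-\alpha$ directly (no iteration is actually needed, since $\alpha$ ranges over all positive reals), so the argument above via a single application of cash subadditivity to $X$ with scalar $\alpha=\rho(X)-\e$ is clean. The structure parallels the proof of Proposition~\ref{characterization casp and truncated}, with cash subadditivity plus cash compatibility playing the role that cash additivity subject to positivity played there.
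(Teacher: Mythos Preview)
Your proof is correct and follows essentially the same approach as the paper's own proof: the implication \textit{(a)}$\Rightarrow$\textit{(b)} is dispatched via Remark~\ref{remark cash sub}(i), and for \textit{(b)}$\Rightarrow$\textit{(a)} with $\cA=\cA(\rho)$ cash compatibility gives the upper bound $\rho_{\cA(\rho)}(X)\le\rho(X)$, while a single application of cash subadditivity at each $0<\alpha<\rho(X)$ (your $\alpha=\rho(X)-\e$) shows $X+\alpha\notin\cA(\rho)$ and hence $\rho_{\cA(\rho)}(X)\ge\rho(X)$. The only cosmetic difference is that the paper parameterizes directly by $\alpha\in(0,\rho(X))$ rather than by $\e$, and it does not explicitly re-verify that $\cA(\rho)$ is a surplus-invariant acceptance set (though your inclusion of that check is certainly appropriate given the statement).
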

\begin{proof}
We only need to prove that \textit{(b)} implies \textit{(a)}, as the converse implication follows from Remark \ref{remark cash sub}. To this end, take $X\in L^\infty$. We claim that $\rho=\rho_{\cA(\rho)}^\vee$. Since $\rho$ is cash compatible we have $\rho(X+\rho(X))=0$ so that $X+\rho(X)\in\cA(\rho)$ and $\rho_{\cA(\rho)}^\vee(X)\leq\rho(X)$. If $\rho(X)=0$ we immediately get $\rho_{\cA(\rho)}^\vee(X)=\rho(X)$, hence assume $\rho(X)>0$. Take an arbitrary $0<\alpha<\rho(X)$. By cash subadditivity we have $\rho(X+\alpha)\ge\rho(X)-\alpha>0$ so that $X+\alpha\notin\cA(\rho)$. As a result, $\rho_{\cA(\rho)}^\vee(X)\geq\alpha$ for all such $\alpha$, implying that $\rho_{\cA(\rho)}^\vee(X)\geq\rho(X)$. We conclude that \textit{(b)} holds for $\cA:=\cA(\rho)$.
\end{proof}

\medskip

An immediate consequence of the preceding theorem is that all convex, loss-based risk measures studied in~\cite{ContDeguestHe2013} can be represented in the form $\rho_\cA^\vee$ as soon as they are cash compatible. This highlights, once again, the relevance of risk measures of the form $\rho_{\cA}^\vee$ with respect to a surplus-invariant acceptance set $\cA\subset L^\infty$.

\medskip

\begin{corollary}
Let $\rho:L^\infty\to[0,\infty)$ be a convex loss-based risk measure. If $\rho$ is cash compatible, then $\rho=\rho_{\cA(\rho)}^\vee$.
\end{corollary}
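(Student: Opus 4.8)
The plan is to observe that this corollary is an immediate specialization of Theorem~\ref{repr rho vee under self balancing}, so the work consists entirely in checking that a convex, cash-compatible, loss-based risk measure satisfies the hypotheses of that theorem. First I would recall that, by definition (see Remark~\ref{remark on staum cont definitions}~(ii)), a loss-based risk measure is in particular a positive, surplus-invariant risk measure. The hypothesis of the corollary additionally gives that $\rho$ is cash compatible. Thus the only missing ingredient needed to invoke Theorem~\ref{repr rho vee under self balancing} is that $\rho$ be cash subadditive.

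The key step is therefore to note that convexity supplies cash subadditivity. This is precisely the content of Remark~\ref{remark cash sub}~(iii): loss-based risk measures are cash subadditive whenever they are convex, as established in Section~2.1 of~\cite{ContDeguestHe2013}. With this in hand, $\rho$ is a positive, surplus-invariant risk measure that is both cash compatible and cash subadditive, so Theorem~\ref{repr rho vee under self balancing} applies.

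Applying the theorem, statement~\textit{(b)} (cash subadditivity) holds, hence statement~\textit{(a)} holds, and moreover the theorem asserts that the surplus-invariant acceptance set in~\textit{(a)} can be taken to be $\cA(\rho)$. Therefore $\rho=\rho_{\cA(\rho)}^\vee$, which is exactly the claim.

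There is essentially no substantive obstacle here; the only point requiring care is to make sure the cited facts are lined up correctly — in particular that ``convex loss-based'' does yield cash subadditivity via Remark~\ref{remark cash sub}~(iii) (ultimately from~\cite{ContDeguestHe2013}), and that the surplus invariance built into the definition of a loss-based risk measure is what allows us to use the sharper statement of Theorem~\ref{repr rho vee under self balancing} with $\cA=\cA(\rho)$. Once these are acknowledged, the proof is a one-line deduction.
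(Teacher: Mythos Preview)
Your proposal is correct and matches the paper's approach exactly: the paper presents this corollary without proof, as an immediate consequence of Theorem~\ref{repr rho vee under self balancing} together with Remark~\ref{remark cash sub}~(iii), and you have simply spelled out the verification of the hypotheses (positive, surplus invariant from the definition of loss-based; cash subadditive from convexity; cash compatible by assumption).
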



\bibliographystyle{amsplain}

\end{document}